\newtheorem{lemma}{Lemma}[section]
\newtheorem{theorem}{Theorem}
\theoremstyle{definition}
\theoremstyle{definition}
\newtheorem{remark}[lemma]{Remark}
\theoremstyle{definition}
\newcommand{\gH}{\mathfrak{H}}
\newcommand{\gS}{\mathfrak{S}}
\newcommand{\X}{\mathcal{X}}
\newcommand{\cE}{\mathcal{E}}
\newcommand{\cB}{\mathcal{B}}
\newcommand{\cD}{\mathcal{D}}
\newcommand{\cF}{\mathcal{F}}
\newcommand{\D}{\mathcal{D}}
\newcommand{\cC}{\mathcal{C}}
\newcommand{\eps}{\epsilon}
\newcommand{\C}{\mathbb{C}}
\newcommand{\R}{\mathbb{R}}
\newcommand{\sig}{\boldsymbol{\sigma}}
\newcommand\ii{{\ensuremath {\infty}}}
\newcommand\1{{\ensuremath {\mathds 1} }}
\newcommand{\wto}{\rightharpoonup}
\newcommand\pscal[1]{{\ensuremath{\left\langle #1 \right\rangle}}}
\newcommand{\norm}[1]{ \left| \! \left| #1 \right| \! \right| }
\newcommand{\CJ}{\mathscr{C}}
\renewcommand{\phi}{\varphi}
\def\tr{\mathop{\rm tr}\nolimits} % trace
\def\Tr{{\rm Tr}_{\C^2}}
\newcommand\vF{v_{\rm F}}
\newcommand\veff{v_{\rm eff}}
\begin{document}

\title{Ground state properties of graphene in Hartree-Fock theory}

\author[C. Hainzl]{Christian Hainzl}
\address{Mathematisches Institut, Auf der Morgenstelle 10, D-72076 Tübingen, Germany}
\email{christian.hainzl@uni-tuebingen.de }

\author[M. Lewin]{Mathieu Lewin}
\address{CNRS \& University of Cergy-Pontoise, Mathematics Department (UMR 8088), F-95000 Cergy-Pontoise, France} 
\email{mathieu.lewin@math.cnrs.fr}

\author[C. Sparber]{Christof Sparber}
\address{Department of Mathematics, Statistics, and Computer Science, University of Illinois at Chicago, 
851 South Morgan Street, Chicago, IL 60607, USA}
\email{sparber@math.uic.edu}

\begin{abstract}
We study the Hartree-Fock approximation of graphene in infinite volume, with instantaneous Coulomb interactions. First we construct its translation-invariant ground state and we recover the well-known fact that, due to the exchange term, the effective Fermi velocity is logarithmically divergent at zero momentum. In a second step we prove the existence of a ground state in the presence of local defects and we discuss some properties of the linear response to an external electric field. All our results are non perturbative.

\bigskip

\noindent\tiny\copyright 2012, by the authors. This paper may be reproduced, in its entirety, for non-commercial purposes.
\end{abstract}

\maketitle

\begin{center}{\it Dedicated to Elliott H. Lieb on the occasion of his 80th birthday}\end{center}

\tableofcontents

\section{Introduction}

Graphene is a mono-crystalline
graphitic film in which electrons behave like 2-dimensional Dirac fermions without mass. It has attracted a huge interest in the last decades~\cite{NetGuiPerNovGei-09}.
The Fermi surface of metals in dimension $d$ is usually (the union of) $(d-1)$-dimensional manifold(s), whereas for gapless semi-conductors it is often composed of parabolic-like points. Graphene, on the other hand, has a peculiar quasi-particle dispersion relation which is conical at the degeneracy points, leading to the effective massless Dirac equation. This is not so exceptional, however. It was recently shown that conical singularities are generic in (non-relativistic) quantum crystals having the honeycomb lattice symmetry~\cite{FefWei-12}. The conical dispersion relation, combined with the fact that the Fermi velocity $\vF$ is 300 times smaller than the speed of light, makes graphene an ideal condensed matter system for testing in the lab our understanding of 2D massless relativistic particles~\cite{Novoselovetal-05,Zhouetal-06}.

Quantum Electrodynamics (QED) is a very powerful theory which, however, has only been rigorously formulated in a perturbative fashion. 
The small value of the Fermi velocity in graphene restricts the validity of perturbation theory and it is important to be able to resort to non-perturbative methods. Short range interactions in graphene are well understood~\cite{GonGuiVoz-01,GiuMas-09,GiuMas-10}. They do not modify the general properties of the ground state as compared to the free case. 
The effect of long range Coulomb interactions, however, is much less clear. It was argued in~\cite{GonGuiVoz-99,Mishchenko-07} that instantaneous Coulomb interactions lead to an additional logarithmic divergence in the infrared regime for the interacting Fermi velocity, whereas the authors of~\cite{GonGuiVoz-94} claimed that retarded effects make it tend to the speed of light. In~\cite{GiuMasPor-10} a careful renormalization group analysis for retarded interactions indicated that the Fermi velocity actually has a limit, but that it is in general smaller than the speed of light.

In this work we shall not consider retardation effects, but instead confirm the logarithmic divergence in the case of instantaneous Coulomb interactions, in a \emph{fully non-perturbative setting}, based on a Hartree-Fock model.
In~\cite{HaiLewSer-05a,HaiLewSer-05b,HaiLewSol-07}, two of the authors of the present paper have, with \'E. Séré and J.P. Solovej, studied the Hartree-Fock approximation of 3-dimensional Quantum Electrodynamics. These works were inspired by an important paper of Chaix and Iracane~\cite{ChaIra-89} and they have been summarized in~\cite{HaiLewSerSol-07,EstLewSer-08}. They are purely non-perturbative and hold for all values of the coupling constant $0\leq \alpha < 4/\pi\simeq 1.27$. For $\alpha>4/\pi$, the system is known to become unstable~\cite{ChaIraLio-89}. In the present work we extend these results to (massless) electrons in two dimensions. 
We identify the exact Hartree-Fock ground state of the system at half filling and zero temperature under the sole assumption that 
$$0\leq\alpha\leq 0.48637 \quad\text{or, equivalently,}\quad \vF\geq 2.0560,$$ 
which covers the case of graphene (see Section \ref{sec:HFg} below).

Our methodology is as follows. We consider a Hartree-Fock type model in which particles interact through the instantaneous Coulomb potential and with a kinetic energy given by the massless Dirac operator. Since we do not use normal-ordering, the Hamiltonian is unbounded from below. However, we shall, as a first step, construct the free Dirac sea (i.e., the absolute minimizer of the energy in the absence of external fields) by means of a thermodynamic limit. This state corresponds to filling the negative energies of an effective mean-field translation-invariant operator of the form $\veff(p) \sig\cdot p$. Here, $\veff(p)$ denotes an effective Fermi velocity which we shall compute exactly and show that,
for small momentum, it diverges like
$$\veff(p)\underset{p\to0}\sim \frac{1}{4}\log\frac{\Lambda}{|p|},$$ where $\Lambda$ is a fixed ultraviolet cut-off. 
In a second step, we shall introduce an external electrostatic field and obtain a bounded-below energy by subtracting the (infinite) energy of the free Dirac sea. 
This enables us to prove the existence of a ground state in the presence of the external field, in infinite volume. In other words, we use the translation-invariant free state as a reference and we describe variations compared to it. 

In addition, we shall calculate the linear response function $B(p)$ of the density of charge of graphene to an external electrostatic field, and we shall show that, in the limit of small momenta, 
$$B(p) \underset{p\to0}{\sim} \frac\pi{4\,\log\left(\frac{\Lambda}{|p|}\right)}.$$ 
It is interesting to note that the dielectric behavior of Hartree-Fock graphene is universal at low momentum, that is, independent of the bare Fermi velocity $\vF$. As we shall see, this follows from the fact that the effective Fermi velocity $\veff(p)$ has the same property.

As compared to the previously quoted results in 3 dimensions, we deal here with \emph{massless} Dirac fermions. Mathematically, this creates a certain lack of control of the behavior of minimizing sequences in the infrared domain, which complicates the study of the exchange term. Interestingly, though, our existence proof for ground states heavily relies on the effective Dirac operator including the logarithmically effective Fermi velocity, and which can be used to get a better control. On the other hand, the ultraviolet behavior which was problematic in the three dimensional case is not an issue here. For graphene it would even have no meaning since the Dirac dispersion relation is only valid for low momentum anyhow. In this paper we consequently impose a sharp ultraviolet cut-off $\Lambda$ which is kept fixed all along our study and which mimics the presence of the underlying lattice.

\medskip

The paper is organized as follows. In the next section we properly introduce the Hartree-Fock approximation of massless 2D QED. Then, in Section~\ref{sec:free} we construct the free Fermi sea of graphene and we discuss the Fermi velocity divergence at low momentum due to the Coulomb exchange term. In Section~\ref{sec:BDF} we prove the existence of a ground state in the presence of external fields, like those induced by defects in graphene. This will allow us to compute the linear response of graphene in Section~\ref{sec:linear_response}. Section~\ref{sec:proof} contains the proof of our main theorem, whereas in the Appendix we prove a useful localization formula for the massless pseudo-relativistic 
kinetic operator $\sqrt{-\Delta}$.

\bigskip

\noindent\textbf{Acknowledgment.} M.L. acknowledges financial support from the French Ministry of Research (ANR-10-BLAN-0101) and from the  European Research Council under the European Community's Seventh Framework Programme (FP7/2007-2013 Grant Agreement MNIQS 258023).

%%%%%%%%%%%%%%%%%%%%%%%%%%%%%%%%%%%%%%%%
\section{Hartree-Fock theory of graphene}\label{sec:HFg}
%%%%%%%%%%%%%%%%%%%%%%%%%%%%%%%%%%%%%%%%

In this section we explain how to derive our model from 2D Quantum Electrodynamics, following~\cite{HaiLewSol-07}. We consider a system of 2D 
massless electrons interacting through the 3D instantaneous Coulomb potential. The latter is justified by the fact that while the electrons in graphene are essentially confined in 2D, the electric field clearly still acts in all three 
spatial dimensions.
The corresponding \emph{formal} 2D QED-type Hamiltonian, written in Coulomb gauge, then reads \cite{GonGuiVoz-99,GonGuiVoz-01,BorPolAsgMac-09,NetGuiPerNovGei-09}
\begin{multline}
\mathbb{H}^{V}= \vF\int_{\R^2} \Psi^*(x)\, \sig \cdot(-i\nabla)\Psi(x) \,dx+\int_{\R^2} V(x)\rho(x)\,dx \\
+ \frac{1}2 \iint_{\R^2\times\R^2} \frac{\rho(x)\rho(y)}{|x-y|}dx\,dy
\label{Ham}
\end{multline}
with $\sig = (\sigma^1, \sigma^2)$, the first two Pauli-matrices.  Here, and in the following, we shall use the notation
$$D^0 = -i \sig \cdot \nabla = -i\sigma^1 \partial_{x_1}  -i\sigma^2 \partial_{x_2}$$
for the massless 2D Dirac operator.

We are working in atomic units with the mass and charge of the electrons normalized to 1. The (bare) Fermi velocity $\vF$ is about $\vF\simeq 2.20$ in graphene. Thus the electrostatic interaction, i.e., 
the last term of~\eqref{Ham}, and the kinetic part are of the same order. The internal variable of the electrons is a \emph{pseudo-spin} which describes to which of the two interlaced triangular lattices of carbon which make up the hexagonal lattice the electron belongs. Since we will not consider magnetic fields in this paper, we have neglected the (regular) spin of the electron for simplicity. The previous Hamiltonian describes well the electrons with energies about the Fermi energy of graphene, provided that they live on a much larger scale than the lattice size. In this regime graphene can be seen as a continuous medium, leading to the 2D QED-type Hamiltonian~\eqref{Ham}.

Let us now briefly explain the main objects which enter in the definition of the Hamiltonian~\eqref{Ham}. In there $\Psi(x)$ is the second quantized field operator which annihilates an electron at $x$ and satisfies the anti-commutation relation
\begin{equation}
 \Psi^*(x)_\sigma\Psi(y)_\nu+\Psi(y)_\nu\Psi^*(x)_\sigma = 2\delta_{\sigma,\nu}\delta(x-y),
\label{CAR}
\end{equation}
with $\sigma,\nu\in\{\pm1/2\}$ the pseudo-spin variables.
The operator $\rho(x)$ is the \emph{density operator} defined by
\begin{equation}
\rho(x) =  \sum_{\sigma
=1}^2\frac{[\Psi_\sigma^*(x),\Psi_\sigma(x)]}{2},
\label{def_rho}
\end{equation}
where $[a,b]=ab-ba$. Finally, the function $V$ is a local external electrostatic potential which is applied to the system. It is for instance induced by a set of defects in the system. 
The Hamiltonian $\mathbb{H}^{V}$ now formally acts on the fermionic Fock space $\mathscr F$ for the electrons. The commutator in the definition~\eqref{def_rho} of $\rho(x)$ is important as it ensures charge conjugation invariance~\cite{Schwinger-48}. Precisely, we have 
$$\CJ\rho(x)\CJ^{-1}=-\rho(x),\qquad \CJ\mathbb{H}^{V} \CJ^{-1}=\mathbb{H}^{-V},$$
where $\CJ$ is the charge conjugation operator acting on the Fock space. The Hamiltonian $\mathbb{H}^{V}$ is  unbounded from below on $\mathscr F$ and it is not even a well-defined self-adjoint operator. 
However, it is still possible to define it in a box with suitable boundary conditions and with an ultraviolet cut-off, as was done in~\cite{HaiLewSol-07} (see Section \ref{sec:free} below for more details).

We have already neglected photons in our model. We shall now make another approximation, by restricting our attention to Hartree-Fock states. Let us recall that the electronic \emph{one-body density matrix} (two point function) of any electronic state $\Omega$ is defined as
$$\gamma(x,y)_{\sigma,\sigma'}=\langle\Psi^*(x)_\sigma\Psi(y)_{\sigma'}\rangle_\Omega.$$
It is an operator on the one-body space such that $0\leq \gamma\leq 1$, due to the anti-commutation relations. It is an orthogonal projection for (and only for) pure Hartree-Fock states.
In view of \eqref{def_rho}, it is natural to introduce a \emph{renormalized one-body density matrix}
$$\gamma_{\rm ren}(x,y)_{\sigma,\sigma'}=\pscal{\Omega\left|\frac{[\Psi(x)_{\sigma}^*,\Psi(y)_{\sigma'}]}{2}\right|\Omega}.$$
By \eqref{CAR}, we obtain the simple relation
$$\gamma_{\rm ren}=\gamma-\frac{I}{2},$$
where $I$ is the identity operator on the one-body space. 

Electronic Hartree-Fock states form a subset of states which are completely determined by their density matrix $\gamma$ (or equivalently by their renormalized density matrix $\gamma_{\rm ren}=\gamma-I/2$). 
The energy of any such Hartree-Fock state reads
$$\langle\mathbb{H}^{V}\rangle=\cE_{\rm HF}^V(\gamma-I/2)+\text{C}$$
where the constant $\text{C}$ diverges for infinite volume, and
\begin{multline}
 \cE_{\rm HF}^V(\gamma_{\rm ren})=\vF\tr(D^0\gamma_{\rm ren})+\int_{\R^2} V(x)\rho_{\gamma_{\rm ren}}(x)\,dx\\
+\frac{1}{2}\iint_{\R^2\times\R^2}\frac{\rho_{\gamma_{\rm ren}}(x)\rho_{\gamma_{\rm ren}}(y)}{|x-y|}dx\,dy
-\frac{1}{2}\iint_{\R^2\times\R^2}\frac{|\gamma_{\rm ren}(x,y)|^2}{|x-y|}dx\,dy.
\label{HF_QED}
\end{multline}
The reader can recognize the well-known Hartree-Fock energy~\cite{LieSim-77}, but applied to the renormalized density matrix $\gamma_{\rm ren}=\gamma-I/2$ instead of the usual density matrix $\gamma$. This is of course a consequence of our choice of a charge-conjugation invariant density operator $\rho(x)$ in~\eqref{def_rho}. The last two terms of the first line of~\eqref{HF_QED} are, respectively, the kinetic energy and the interaction energy of the electrons with the external potential $V$. In the second line appear, respectively, the so-called \emph{direct} and \emph{exchange} terms. In Relativistic Density Functional Theory \cite{Engel-02}, the latter is approximated by a function of $\rho_\gamma$ and its derivatives only, a procedure which we shall not follow here.

One of the main goals of this paper is to construct ground states for the Hartree-Fock energy~\eqref{HF_QED} describing electrons in graphene. This energy is not bounded from below and it is not well defined as such, because these states always have infinitely many electrons. But following~\cite{HaiLewSer-05a,HaiLewSer-05b,HaiLewSol-07} we shall see in the next section that it is possible to construct ground states by using a thermodynamic limit procedure. 

In this section we have considered generalized (mixed) Hartree-Fock states, whose density matrix $\gamma$ only fulfills the condition $0\leq\gamma\leq I$. This technique was first proposed by Lieb~\cite{Lieb-81} and it is very convenient when proving existence results for ground states. In view of a variational principle from~\cite{Lieb-81}, Hartree-Fock ground states are always pure in the presence of repulsive interactions, i.e. their density matrix is automatically a projection in the end.

%%%%%%%%%%%%%%%%%%%%%%%%%%%%%%%%%%%%%%%%
\section{Fermi velocity enhancement in Hartree-Fock graphene}\label{sec:free}
%%%%%%%%%%%%%%%%%%%%%%%%%%%%%%%%%%%%%%%%

In this section we consider a graphene sheet without any external field, $V\equiv0$, and we investigate the effect of the Coulomb interactions among the electrons. In mean-field theory it is well-known that the effective Fermi dispersion relation becomes singular at $0$. This \emph{enhancement of the Fermi velocity} has already been remarked in~\cite{GonGuiVoz-99,Katsnelson-06,Vafek-07,BorPolAsgMac-09}. Our main contribution in this section is the rigorous proof that the so-obtained state is actually the true ground state of the system. The method thereby follows that of~\cite{HaiLewSol-07}.

If the electrostatic interactions between particles are neglected and the system is confined to a box with periodic boundary conditions, it is obvious that the unique minimizer is the (non-interacting) free Dirac sea, which converges in the thermodynamic limit to the infinite-volume (non-interacting) free Dirac sea. The latter is an infinite Hartree-Fock state containing all the negative energy electrons (in accordance with the old Dirac picture~\cite{Dirac-33,Dirac-34b}), whose density matrix and renormalized density matrix are, respectively, given by
\begin{equation}
P^0_-=\1(D^0\leq0)\qquad\text{and}\qquad 
\gamma^0_{\rm ren}=-\frac{D^0}{2|D^0|}=P^0_--\frac{I}{2}.
\label{eq:def_free_vaccum} 
\end{equation}
When interactions are taken into account, the free Dirac sea changes but it stills remains translation invariant. 
The latter was rigorously proved in the (massive) 3D case in~\cite{HaiLewSol-07}, for $\alpha<4/\pi$. The same result will be true here.

The energy per unit volume of a translation-invariant state $\gamma_{\rm ren}=f_{\rm ren}(p)$ is given by
\begin{equation}
\cF(\gamma_{\rm ren}) = \frac 1{(2\pi)^2} \left( \vF\int_{B(0,\Lambda)} \Tr \big(\sig\cdot p\, f_{\rm ren}(p)\big) - \frac {1}2 \int_{\R^2} \frac{|\check{f}_{\rm ren}(x)|^2}{|x|} dx  \right).
\label{eq:def_energy_per_unit_volume}
\end{equation} 
This energy is bounded from below provided an ultraviolet cut-off $\Lambda$ is inserted. 
In the context of graphene, the ultraviolet cut-off $\Lambda$ mimics the presence of the carbon lattice in graphene. Its physical value is $\Lambda\simeq 0.1\, {\rm \AA}^{-1}$, see~\cite{Zhouetal-06}.
The state $\gamma_{\rm ren}$ is a multiplication operator by the $2\times2$ matrix $f_{\rm ren}(p)$ in Fourier space, supported in the ball of radius $\Lambda$. Its kernel in $x$ space is then given by $(-2\pi)^{-1}\check{f}_{\rm ren}(x-y)$ where $\check{f}_{\rm ren}$ is the Fourier inverse of $f_{\rm ren}$. The density of charge of any such translation-invariant state is found to be constant in space:
$$\rho_{\gamma_{\rm ren}}=(2\pi)^{-1}\tr_{\C^2}\check{f}_{\rm ren}(0)=(2\pi)^{-2}\int_{B(0,\Lambda)}\tr_{\C^2}\big(f_{\rm ren}(p)\big)\,dp.$$
The second term in the energy~\eqref{eq:def_energy_per_unit_volume} is the exchange term per unit volume. We have assumed that our translation-invariant state $\gamma_{\rm ren}$ has no density of charge and thus 
there is no direct term in the energy. We will verify below that, indeed, $\rho_{\gamma_{\rm ren}}\equiv0$ for the minimizer. On physical ground it is clear why this must hold since the Coulomb energy of a constant density of charge is not proportional to the volume unless it vanishes identically. 
Recalling that $\gamma_{\rm ren}=\gamma-I/2$, the constraint $0\leq\gamma\leq I$ then takes the form
\begin{equation}
-\frac{I_{\C^2}}{2}\leq f_{\rm ren}(p)\leq \frac{I_{\C^2}}{2} \qquad\text{for a.e. $|p|\leq \Lambda$,}
\label{eq:constraint_translation_inv}
\end{equation}
where $I_{\C^2}$ is the $2\times2$ identity matrix. 
\begin{remark}
Let us remark that adding the ultraviolet cut-off $\Lambda$ is equivalent to replacing the one-particle Hilbert space $L^2(\R^2,\C^2)$ by the Hilbert space 
\begin{equation}\label{eq:HScut}
\gH_\Lambda:=\{\phi\in L^2(\R^2,\C^2)\ :\ {\rm supp}(\widehat{\phi})\subset B(0,\Lambda)\}.
\end{equation}
\end{remark}
In this section we will show that the \emph{non-interacting} state $\gamma^0_{\rm ren}$, defined in~\eqref{eq:def_free_vaccum} and 
which consists in filling all the negative energies of the free Dirac operator, is the unique ground state of the \emph{interacting} energy per unit volume~\eqref{eq:def_energy_per_unit_volume}. This surprising fact only occurs because of the absence of a mass. It is not true for massive particles, for which the interacting ground state depends in a nonlinear manner on interactions~\cite{LieSie-00,HaiLewSol-07}.

The renormalized density matrix $\gamma^0_{\rm ren}$ is the multiplication operator in the Fourier domain by the matrix $$f^0_{\rm ren}(p)=-\frac{\sig\cdot p}{2|p|}=\1_{(-\ii,0)}(\sig\cdot p)-\frac{I_{\C^2}}{2}.$$
Because the Pauli matrices are trace-less, the charge density of this state vanishes, i.e.,
$\rho_{\gamma^0_{\rm ren}}\equiv0,$
as was announced before. The mean-field (Fock) operator of this state is given by
\begin{equation}
\cD^0=\vF D^0-\frac{\check{f}^0_{\rm ren}(x-y)}{2\pi|x-y|}.
\label{def:cD_0}
\end{equation}
It is nothing else but the derivative of the energy~\eqref{eq:def_energy_per_unit_volume} at $f^0_{\rm ren}$. The second term on the right is the exchange term. The following gives the formula of $\cD^0$ in Fourier space.

\begin{lemma}[Effective velocity of graphene]\label{lem:cD_0}
With $f^0_{\rm ren}(p)=-\sig\cdot p/(2|p|)$, the mean-field translation-invariant operator~\eqref{def:cD_0} can be written as
\begin{equation}
\cD^0(p)=v_{\rm eff}(p)\, \sig\cdot p,\qquad\text{where}\quad 
v_{\rm eff}(p):=\vF+g\left(\frac{\Lambda}{|p|}\right)
\label{eq:cD_0}
\end{equation}
and
\begin{equation}
g(R)=\frac{1}{2\pi}\int_0^{\pi} \int_0^{R} \frac{\cos \theta}{\sqrt{r^2 - 2r \cos \theta + 1 }} r dr d\theta.
\label{eq:g}
\end{equation}
The function $g$ is increasing on $[1,\ii)$. It satisfies 
$$g(1)=\frac{2{\rm G}-1}{2\pi}\simeq 0.1324$$ 
where ${\rm G}=\sum_{n\geq0}(-1)^n(2n+1)^{-2}\simeq0.9160$ is Catalan's constant and 
$$g(r)= \frac{\log(r)}4+O(1)_{r\to\ii}.$$
\end{lemma}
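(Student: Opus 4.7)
The plan is to exploit translation invariance of $\cD^0$ to realise it as a matrix-valued Fourier multiplier, compute that multiplier in closed form, and then analyze the resulting scalar function $g$. Since $\vF D^0$ contributes $\vF\,\sig\cdot p$, I first identify the Fourier multiplier of the exchange operator with kernel $-\check f^0_{\rm ren}(x-y)/(2\pi|x-y|)$. Using the two-dimensional identity $\widehat{1/|x|}(p)=2\pi/|p|$ together with the convolution theorem, this multiplier is proportional to $\int_{|q|\le\Lambda}(\sig\cdot q)/(|q|\,|p-q|)\,dq$, the sign from $f^0_{\rm ren}(q)=-\sig\cdot q/(2|q|)$ cancelling the sign in front of the kernel. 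Decomposing $q=(q\cdot\hat p)\hat p+q_\perp$ with $\hat p=p/|p|$, the reflection $q_\perp\mapsto-q_\perp$ preserves both $|q|$ and $|p-q|$ while flipping the perpendicular part of $\sig\cdot q$; hence that contribution integrates to zero and the multiplier is a scalar times $\sig\cdot\hat p$. Passing to polar coordinates with angle $\theta$ measured from $\hat p$ and rescaling $q=|p|\, s\, \hat\omega$, the radial integral becomes exactly the double integral defining $g(R)$ with $R=\Lambda/|p|$, and tracking the numerical prefactor gives $\cD^0(p)=(\vF+g(\Lambda/|p|))\,\sig\cdot p$.

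For monotonicity, differentiation under the integral sign gives $g'(R)=\frac{1}{2\pi}\int_0^\pi R\cos\theta/\sqrt{R^2-2R\cos\theta+1}\,d\theta$. Pairing $\theta\in(0,\pi/2)$ with $\pi-\theta\in(\pi/2,\pi)$ recombines the integrand as $R\cos\theta\,[(R^2-2R\cos\theta+1)^{-1/2}-(R^2+2R\cos\theta+1)^{-1/2}]$, which is strictly positive for $R>0$ because $R^2-2R\cos\theta+1<R^2+2R\cos\theta+1$. Hence $g$ is strictly increasing on $(0,\infty)$ and in particular on $[1,\infty)$.

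For $g(1)$, I would compute the inner $r$-integral in closed form using the splitting $r=(r-\cos\theta)+\cos\theta$, which yields the antiderivative $\sqrt{r^2-2r\cos\theta+1}+\cos\theta\log(r-\cos\theta+\sqrt{r^2-2r\cos\theta+1})$. Evaluating between $0$ and $1$ and simplifying via $1-\cos\theta=2\sin^2(\theta/2)$ gives $2\sin(\theta/2)-1+\cos\theta\log(1+1/\sin(\theta/2))$. The outer $\theta$-integral then splits into the elementary pieces $2\int_0^\pi\cos\theta\sin(\theta/2)\,d\theta=-4/3$ and $\int_0^\pi\cos\theta\,d\theta=0$, plus the logarithmic piece $I:=\int_0^\pi\cos^2\theta\log(1+1/\sin(\theta/2))\,d\theta$. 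Expanding the logarithm as a Fourier series after the substitution $\phi=\theta/2$ and integrating termwise against $\cos^2(2\phi)=(1-2\sin^2\phi)^2$ reduces $I$ to the defining series $\sum_{n\ge0}(-1)^n/(2n+1)^2$ of Catalan's constant ${\rm G}$; combining the three contributions yields $g(1)=(2{\rm G}-1)/(2\pi)$.

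For the large-$R$ asymptotics, the uniform expansion $r/\sqrt{r^2-2r\cos\theta+1}=1+(\cos\theta)/r+O(1/r^2)$ valid for $r\ge1$ gives, on splitting $\int_0^R=\int_0^1+\int_1^R$, that $\int_0^R r\,dr/\sqrt{r^2-2r\cos\theta+1}=R+\cos\theta\log R+O(1)$ as $R\to\infty$. Integrating against $\cos\theta/(2\pi)$ and using $\int_0^\pi\cos\theta\,d\theta=0$ and $\int_0^\pi\cos^2\theta\,d\theta=\pi/2$ yields the claimed $g(R)=(\log R)/4+O(1)$. The main obstacle in the whole argument is the closed-form evaluation of $I$ in terms of Catalan's constant: the Fourier multiplier reduction, the symmetry argument for monotonicity, and the asymptotic expansion are all essentially direct, but recognising $I$ as a specific combination involving ${\rm G}$ requires a careful Fourier expansion of $\log(1+1/\sin(\theta/2))$ and a delicate termwise reduction to the alternating series for ${\rm G}$.
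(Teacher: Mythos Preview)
Your proposal is correct and mirrors the paper's proof: the same Fourier-multiplier/symmetry reduction to the scalar $g$, the same $\theta\leftrightarrow\pi-\theta$ pairing for monotonicity, the same $r$-antiderivative for $g(1)$, and the same $1/r$ expansion for the large-$R$ asymptotics. The only difference is in the final extraction of Catalan's constant: the paper splits $\log(1+1/\sin(\theta/2))=\log(1+\sin(\theta/2))-\log\sin(\theta/2)$ and invokes the known integral representations ${\rm G}=2\int_0^{\pi/4}\log(2\cos\theta)\,d\theta=-\int_0^{\pi/4}\log(2\sin\theta)\,d\theta$, which is a bit quicker than your proposed Fourier-series reduction of $I$.
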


We see that the effective Fermi velocity $v_{\rm eff}(p)$ is logarithmically divergent at $p=0$, i.e.
\[
v_{\rm eff}(p)=\frac{1}{4}\log\frac{\Lambda}{|p|}+O(1)_{|p|\to0}.
\]
This is the well-known velocity enhancement mentioned in the title of the section (for comparison see, e.g.,~\cite[Eq. (220)]{NetGuiPerNovGei-09}). Here the $O(1)$ 
is independent of $\Lambda$. 
\begin{remark}
The logarithmic divergence is sometimes called the \emph{Kohn anomaly}. It has the effect of reducing the density of states near the Dirac energy \cite{NetGuiPerNovGei-09}. 
\end{remark}
Using that $g(\Lambda/|p|)\geq g(1)$, we see that
$$|\cD^0(p)|\geq (\vF+g(1))|D^0(p)|,$$
an inequality that will play an important role later when we will show that $\gamma^0_{\rm ren}$ is the unique minimizer in the absence of external potentials. But before we shall state the proof of Lemma~\ref{lem:cD_0}.

\begin{proof}
Using that the Fourier transform of $|x|^{-1}$ is exactly $|k|^{-1}$ in 2D, we can write the translation-invariant operator $\cD^0$ defined in~\eqref{def:cD_0} in Fourier space as
\begin{equation}
\vF\, \sig\cdot p-\cD^0(p)= \frac{1}{2\pi}\int_{|k|\leq \Lambda}\frac{f^0_{\rm ren}(k)}{|p-k|}\,dk= -\frac{1}{4\pi}|p|\sig\cdot \int_{|k|\leq \Lambda/|p|}\frac{\omega_k }{|k-\omega_p|}\,dk
\label{def:cD_0_Fourier}
\end{equation}
with $\omega_k:=k/|k|$. It is clear that the vector 
$$\int_{|k|\leq \Lambda/|p|}\frac{\omega_k }{|k-\omega_p|}\,dk$$
is co-linear to $p$. Hence we can also write
\begin{equation*}
\int_{|k|\leq \Lambda/|p|}\frac{\omega_k }{|k-\omega_p|}\,dk=\omega_p\int_{|k|\leq \Lambda/|p|}\frac{\omega_p\cdot\omega_k }{|k-\omega_p|}\,dk
\end{equation*}
which leads to $\cD^0(p)=v_{\rm eff}(p)\sig\cdot p$ with $v_{\rm eff}(p)$ as in~\eqref{eq:cD_0} and
\begin{align*}
g(R)&=\frac{1}{4\pi}\int_{|k|\leq R}\frac{\omega_p\cdot\omega_k }{|k-\omega_p|}\,dk\\
&=\frac{1}{4\pi}\int_0^{2\pi} \int_0^{R} \frac{\cos \theta}{\sqrt{r^2 - 2r \cos \theta + 1 }} r dr d\theta\\
&=\frac{1}{2\pi}\int_0^{\pi/2} \int_0^{R} \cos \theta\left(\frac{1}{\sqrt{r^2 - 2r \cos \theta + 1 }}-\frac{1}{\sqrt{r^2 + 2r \cos \theta + 1 }}\right) r dr d\theta.
\end{align*}
Since the integrand is non-negative, it is now clear that $g$ is increasing on $[1,\ii)$. For large $R$ we have
$$g(R)\underset{R\to\ii}{\sim}\frac{\log R}{\pi}\int_0^{\pi/2} \cos^2 \theta\, d\theta=\frac{\log R}{4}.$$

For the value of $g(1)$, we integrate first in $r$ and obtain
$$\int_0^1\frac{r}{\sqrt{r^2 - 2r \cos \theta + 1}} dr=-1 + \cos(\theta) \log\big(1 + \sin^{-1}(\theta/2)\big) + 2 \sin(\theta/2)$$
and we therefore get
\begin{align*}
g(1)&=\frac{1}{2\pi}\int_0^{\pi}\Big(\cos^2(\theta) \log\big(1 + \sin^{-1}(\theta/2)\big) + 2 \cos(\theta)\sin(\theta/2)\Big)\,d\theta\\
&=\frac{1}{2\pi}\int_0^{\pi}\cos^2(\theta) \log\big(1 + \sin(\theta/2)\big)\,d\theta-\frac{1}{2\pi}\int_0^{\pi}\cos^2(\theta) \log\big(\sin(\theta/2)\big)\,d\theta-\frac{2}{3\pi}.
\end{align*}
The result follows by explicit integration, using that
$${\rm G}=2\int_0^{\pi/4}\log(2\cos(\theta))\,d\theta=-\int_0^{\pi/4}\log(2\sin(\theta))\,d\theta.$$
This concludes the proof of the lemma.%\marginpar{\tt I got the value of $g(1)$ with Mathematica... :-(}
\end{proof}

Because $\cD^0$ is equal to the original $D^0$ multiplied by $\vF+g(\Lambda/|p|)\geq g(1)>0$, we have
$$\gamma^0_{\rm ren}=-\frac{\cD^0}{2|\cD^0|}=\1_{(-\ii,0)}(\cD^0)-\frac{I}{2}.$$
In other words, the \emph{non-interacting} free Dirac sea solves the nonlinear equation of the \emph{interacting} system. 
This is in stark contrast with the results of~\cite{LieSie-00,HaiLewSol-07} in which the interacting Dirac sea was found to be very different from the non-interacting one, as we have already mentioned. With Lemma~\ref{lem:cD_0} at hand, we are now able to prove that $\gamma^0_{\rm ren}$ is indeed the global minimizer of $\cF$.

\begin{theorem}[Ground state of free graphene]\label{thm:free_vacuum}
Fix any ultraviolet cut-off $\Lambda>0$. If 
\begin{equation}
v_F\geq \frac 14 \frac{\Gamma(1/4)^2}{\Gamma(3/4)^2}-g(1)\simeq 2.0560,
\label{eq:cond_v_F}
\end{equation}
then the minimization problem
\[
\min\Big\{\cF(\gamma_{\rm ren})\ :\ \gamma_{\rm ren}=f_{\rm ren}(p),\ -\1_{B(0,\Lambda)}/2\leq f_{\rm ren}\leq \1_{B(0,\Lambda)}/2,\ \rho_{\gamma_{\rm ren}}\equiv0\Big\}
\]
with $\cF$ defined in~\eqref{eq:def_energy_per_unit_volume}, admits the unique minimizer 
$$\gamma^0_{\rm ren}=-\frac{D^0}{2|D^0|}=-\frac{\cD^0}{2|\cD^0|}.$$
\end{theorem}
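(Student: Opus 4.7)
The plan is to show $\cF(f_{\rm ren})\geq\cF(f^0_{\rm ren})$ with equality only at $f_{\rm ren}=f^0_{\rm ren}$, by completing the square around $f^0_{\rm ren}$ and combining two a priori bounds: a matrix inequality coming from the constraint $-I_{\C^2}/2\leq f_{\rm ren}\leq I_{\C^2}/2$, and a sharp Hardy--Kato inequality in $\R^2$. Setting $Q:=f_{\rm ren}-f^0_{\rm ren}$, and noting that $\rho_{\gamma_{\rm ren}}\equiv\rho_{\gamma^0_{\rm ren}}\equiv 0$ kills any direct term, I expand $\cF$ and use that the operator in~\eqref{def:cD_0} is precisely the first variation of $\cF$ at $f^0_{\rm ren}$ to get
\begin{equation*}
\cF(f_{\rm ren})-\cF(f^0_{\rm ren}) = \frac{1}{(2\pi)^2}\int_{B(0,\Lambda)}\Tr\bigl(\cD^0(p)\,Q(p)\bigr)\,dp - \frac{1}{2(2\pi)^2}\int_{\R^2}\frac{|\check Q(x)|^2}{|x|}\,dx.
\end{equation*}

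The pointwise constraint $-I_{\C^2}/2\leq f^0_{\rm ren}+Q\leq I_{\C^2}/2$ together with $(f^0_{\rm ren})^2=I_{\C^2}/4$ gives $(f^0_{\rm ren}+Q)^2\leq(f^0_{\rm ren})^2$, hence $Q^2\leq-\{f^0_{\rm ren},Q\}$. Writing $f^0_{\rm ren}=P^0_--I_{\C^2}/2$ with $P^0_\pm(p)=(I_{\C^2}\pm\sig\cdot\omega_p)/2$ the spectral projectors of $\cD^0(p)$, a direct computation rearranges this as the matrix inequality
\begin{equation*}
Q(p)^2\leq P^0_+(p)\,Q(p)\,P^0_+(p)-P^0_-(p)\,Q(p)\,P^0_-(p).
\end{equation*}
Taking the matrix trace and using $\cD^0(p)=v_{\rm eff}(p)\,|p|\,(P^0_+-P^0_-)$ together with Lemma~\ref{lem:cD_0} then gives
\begin{equation*}
\Tr\bigl(\cD^0(p)Q(p)\bigr)\geq v_{\rm eff}(p)\,|p|\,\Tr\bigl(Q(p)^2\bigr)\geq\bigl(\vF+g(1)\bigr)\,|p|\,\Tr\bigl(Q(p)^2\bigr).
\end{equation*}

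Next, I would invoke the sharp Herbst--Hardy inequality in $\R^2$ with parameter $\alpha=1$, applied entrywise to $Q$ and summed over matrix entries, to obtain
\begin{equation*}
\frac{1}{2(2\pi)^2}\int_{\R^2}\frac{|\check Q(x)|^2}{|x|}\,dx\leq \frac{1}{(2\pi)^2}\cdot\frac{1}{4}\frac{\Gamma(1/4)^2}{\Gamma(3/4)^2}\int_{B(0,\Lambda)}|p|\,\Tr\bigl(Q(p)^2\bigr)\,dp.
\end{equation*}
Combining the two estimates yields
\begin{equation*}
\cF(f_{\rm ren})-\cF(f^0_{\rm ren})\geq \frac{1}{(2\pi)^2}\left(\vF+g(1)-\frac{1}{4}\frac{\Gamma(1/4)^2}{\Gamma(3/4)^2}\right)\int_{B(0,\Lambda)}|p|\,\Tr\bigl(Q(p)^2\bigr)\,dp,
\end{equation*}
which is non-negative exactly under hypothesis~\eqref{eq:cond_v_F}. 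For uniqueness, $v_{\rm eff}(p)=\vF+g(\Lambda/|p|)>\vF+g(1)$ strictly for $|p|<\Lambda$ by Lemma~\ref{lem:cD_0}, so the first inequality in the previous chain is strict unless $Q(p)=0$, forcing $Q\equiv 0$.

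The main obstacle I anticipate is step three: invoking the Herbst--Hardy inequality in dimension two with its sharp constant $\Gamma(1/4)^2/(2\Gamma(3/4)^2)$ and tracking the $(2\pi)$-normalizations so that the resulting Hardy threshold matches the numerical constant in~\eqref{eq:cond_v_F} exactly. A secondary subtlety is that, although $\Tr(\cD^0 Q)$ only sees the diagonal blocks of $Q$ in the $P^0_\pm$-decomposition, the exchange term sees the off-diagonal blocks too; the matrix inequality in trace form circumvents this because $\Tr(Q^2)=\|Q\|_{\rm HS}^2$ includes all blocks, and the constraint bounds this by the diagonal-block expression $\Tr((P^0_+-P^0_-)Q)$.
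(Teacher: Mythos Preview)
Your proof is correct and follows essentially the same route as the paper: expand $\cF(\gamma)-\cF(\gamma^0_{\rm ren})$, use the pointwise matrix constraint to get $\Tr(\cD^0(p)Q(p))\geq(\vF+g(1))|p|\Tr(Q(p)^2)$, and then apply the sharp 2D Kato inequality~\eqref{eq:Kato}. The only minor variation is in the uniqueness step---the paper invokes the absence of a nonzero optimizer for Kato's inequality, while you exploit the strict inequality $v_{\rm eff}(p)>\vF+g(1)$ for $|p|<\Lambda$; both arguments work.
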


We note that the case of graphene in which $\vF\simeq2.2$ is covered.

\begin{proof}
The proof is exactly the same as in the massive case~\cite{HaiLewSol-07}, since the argument of~\cite{HaiLewSol-07} relies on the fact that the Coulomb potential can 
be estimated by the effective mean-field operator, which does not require a positive mass. 

Going back to Formula~\eqref{eq:def_energy_per_unit_volume}, we write
\begin{multline*}
\cF(\gamma)-\cF(\gamma^0_{\rm ren})=\frac 1{(2\pi)^2} \bigg(\int_{B(0,\Lambda)} \Tr \big(\cD^0(p)\, (f(p)-f^0_{\rm ren}(p))\big)\\
 - \frac {1}2 \int_{\R^2} \frac{|(\check{f}-\check{f}^0_{\rm ren})(x)|^2}{|x|} dx  \bigg). 
\end{multline*}
Using that $-\1(\sigma\cdot p\leq0)\leq f(p)-f^0_{\rm ren}(p)\leq \1(\sigma\cdot p\geq0)$ exactly as in~\cite{HaiLewSol-07}, we see that 
\begin{align*}
\Tr \big(\cD^0(p)\, (f(p)-f^0_{\rm ren}(p))\big)&\geq \Tr \big(|\cD^0(p)|\, (f(p)-f^0_{\rm ren}(p))^2\big)\\
&\geq (\vF+g(1))\Tr \big(|p|\, (f(p)-f_{\rm ren}(p))^2\big)
\end{align*}
for a.e. $p\in B(0,\Lambda)$. Hence we conclude that
$$\cF(\gamma)-\cF(\gamma^0_{\rm ren})\geq \frac 1{(2\pi)^2}\left((\vF+g(1))\int_{\R^2}\big|(-\Delta)^{1/4}F|^2
 - \frac {1}2 \int_{\R^2} \frac{|F|^2}{|x|} dx  \right)$$
with $F:=\check{f}-\check{f}^0_{\rm ren}$. Kato's inequality in 2D tells us that
\begin{equation}
\frac{1}{|x|} \leq \frac 12 \frac{\Gamma(1/4)^2}{\Gamma(3/4)^2}\sqrt{-\Delta},
\label{eq:Kato}
\end{equation}
where the constant is optimal~\cite{Herbst-77,Yafaev-99} (see also~\cite[Lemma 8.2]{LieSei-09}). We deduce that $\cF(\gamma)\geq\cF(\gamma^0_{\rm ren})$, provided 
$\vF$ satisfies the inequality~\eqref{eq:cond_v_F}. Since there is no other optimizer than $0$ for Kato's inequality, even the equality in~\eqref{eq:cond_v_F} is covered.
\end{proof}

In summary, we have proved that $\gamma^0_{\rm ren}=P^0_--I/2$ is the unique minimizer of the energy per unit volume when $V\equiv0$. Arguing exactly as in~\cite{HaiLewSol-07}, it is then 
possible to prove that  $\gamma^0_{\rm ren}$ is also the thermodynamic limit of the true ground states of the Hartree-Fock energy, without the translation-invariance ansatz. The proof is even much easier than in~\cite{HaiLewSol-07} since $\gamma^0_{\rm ren}$ is known exactly and solves the self-consistent equation in a box as well. Instead of pursuing this route in detail, we accept that $\gamma^0_{\rm ren}$ is the actual free Dirac sea, and we now study local perturbations of it, in the spirit of~\cite{ChaIra-89,HaiLewSer-05a,HaiLewSer-05b}.

%%%%%%%%%%%%%%%%%%%%%%%%%%%%%%%%%%%%%%%%
\section{Ground states of Hartree-Fock graphene in local external potentials}\label{sec:BDF}
%%%%%%%%%%%%%%%%%%%%%%%%%%%%%%%%%%%%%%%%

Let us now come back to the Hartree-Fock energy~\eqref{HF_QED} and assume that the external field $V$ does not vanish. 
We will always make the assumption that $V$ is local in a sense to be made precise below, which puts us in a situation where we can think of the sought-after Hartree-Fock ground state as a local perturbation of the free Dirac sea. 

We consider any Hartree-Fock state described by its density matrix $\gamma$ (or equivalently by its renormalized density matrix $\gamma_{\rm ren}=\gamma-I/2$) and which we assume to be ``sufficiently close'' 
to $P^0_-$. The infinite volume Hartree-Fock energy of $\gamma_{\rm ren}$ is of course infinite, it is proportional to the volume exactly like the one of $\gamma_{\rm ren}^0=P^0_--I/2$. However we can, at least formally, subtract the (infinite) constant  $\cE_{\rm HF}^V(\gamma^0_{\rm ren})=\cE_{\rm HF}^0(\gamma^0_{\rm ren})$ and obtain a perfectly well-defined energy. 
A formal computation yields
\begin{equation*}
\cE_{\rm HF}^V(\gamma_{\rm ren})-\cE_{\rm HF}^V(\gamma^0_{\rm ren})=\cE^V_{\rm BDF}(Q),
\end{equation*}
where
$$Q=\gamma_{\rm ren}-\gamma^0_{\rm ren}=\gamma-P^0_-$$
and where $\cE^V_{\rm BDF}$ is the so-called \emph{Bogoliubov-Dirac-Fock} energy, formally defined by
\begin{multline}
\cE^V_{\rm BDF}(Q)=\tr(\cD^0 Q)+\int_{\R^2}V(x)\rho_Q(x)\,dx+\frac12  \iint_{\R^2\times\R^2}\frac{\rho_Q(x)\,\rho_Q(y)}{|x-y|}dx\,dy\\
-\frac12  \iint_{\R^2\times\R^2}\frac{|Q(x,y)|^2}{|x-y|}dx\,dy.
\label{eq:formal_BDF}
\end{multline}
Again the energy functional looks like the usual Hartree-Fock energy, with the difference that $\cD^0$ now appears instead of $D^0$ 
and that it is applied to the operator $Q$ which is a \emph{difference} of two Hartree-Fock density matrices. The operator $Q$ satisfies the constraint
$$-P^0_-\leq Q\leq 1-P^0_-=P^0_+.$$

\begin{remark}To our knowledge, the idea of subtracting the infinite energy of the free Dirac sea in order to get a bounded below energy, was used for the first time in~\cite{HaiLewSol-07}. This was generalized to positive temperature in~\cite{HaiLewSei-08}. In previous works~\cite{ChaIra-89,HaiLewSer-05a} dealing with the Hartree-Fock approximation of QED, another justification based on normal ordering was employed. 
\end{remark}
\begin{remark}
Let us mention that the perturbed state $\gamma$ can always be seen as a Bogoliubov rotation of the free Dirac sea in its Fock representation, which is why Chaix and Iracane used the name `Bogoliubov' for the energy~\eqref{eq:formal_BDF}. We could as well call it a \emph{relative Hartree-Fock energy} but we prefer to keep the name Bogoliubov-Dirac-Fock (BDF) for historical reasons. 
\end{remark}

Our tasks in this section are then to prove that:
\begin{enumerate}
\item[(a)] $Q=0$ is the unique minimizer of $\cE_{\rm BDF}^0$ for $V\equiv0$, which is a ``local'' version of the fact that the free Dirac sea $P^0_-$ is the unique ground state of the system without external field;
\item[(b)] if $V\neq0$, then there exists a ground state for $\cE_{\rm BDF}^V$ which solves the self-consistent Hartree-Fock equation.
\end{enumerate}
Before turning to these problems, we however have to properly define the BDF energy. It is now well understood that the Hartree-Fock ground state $\gamma$ of an infinite Coulomb system can in general behave badly. Usually $Q=\gamma-P^0_-$ is not a trace-class operator and its density is sometimes not in $L^1$, see~\cite{GraLewSer-09,CanLew-10}. It has long-range oscillations which are not integrable at infinity in some cases. For these reasons, it is not fully obvious to give a clear meaning to the BDF energy~\eqref{eq:formal_BDF} and to find a suitable class of states in which minimizers will be. Following ideas from~\cite{HaiLewSer-05a}, we introduce the correct functional analysis setting in the next section.

\subsection{Function spaces and definition of the density}

Given an operator $Q$, we define $Q^{\epsilon\epsilon'}:=P^0_\epsilon QP^0_{\epsilon'}$ where $\epsilon,\epsilon'\in\{\pm\}$. Our starting point is the remark that, for a nice-enough operator $Q$ (say finite rank),
$$\tr(\cD^0 Q)=\tr\big(|\cD^0| (Q^{++}-Q^{--}\big)\geq \tr|\cD^0|Q^2.$$
Here we have used that $P^0_-$ commutes with $\cD^0$ and that 
$$-P^0_-\leq Q\leq P^0_+\quad\Longleftrightarrow\quad Q^2\leq Q^{++}-Q^{--},$$
as was remarked first in~\cite{BacBarHelSie-99}. We see that a state will have a finite relative kinetic energy when $|\cD^0|^{1/2} Q^{\pm\pm}|\cD^0|^{1/2}$ are trace-class, but we cannot gain any other information on $Q^{\pm\mp}$ than $|\cD^0|^{1/2}Q^{\pm\mp}$ being Hilbert-Schmidt. Thus we shall assume that 
\begin{equation}
|\cD^0|^{1/2} Q^{\pm\pm}|\cD^0|^{1/2}\in\gS^1\quad\text{and}\quad Q^{\pm\mp}|\cD^0|^{1/2}\in\gS^2
\label{eq:assumptions_Q} 
\end{equation}
where $\gS^p$ denotes the usual $p$-th Schatten space ($\gS^1$ and $\gS^2$ are respectively the spaces of trace-class and Hilbert-Schmidt operators). This enables us to properly define
$$\tr(\cD^0 Q):=\tr\big(|\cD^0|^{1/2} (Q^{++}-Q^{--})|\cD^0|^{1/2}\big).$$
Our next task is to give a clear definition of the density $\rho_Q$ under the very weak assumptions~\eqref{eq:assumptions_Q} on $Q$. To simplify our exposition we introduce the Banach space 
\begin{equation*}
\X := \Big\{ Q\in\cB(\gH_\Lambda)\ :\ Q^* = Q,\ |p|^{1/2}Q ^{\pm \pm }|p|^{1/2} \in \gS^1,\ Q|p| \in \gS^2\Big\},
\end{equation*}
where we recall from \eqref{eq:HScut} that $\gH_\Lambda$ is the Hilbert space of $L^2$ functions with compact support in $B(0,\Lambda)$ in the Fourier domain. The set of bounded operators on this space is then 
denoted by $\cB(\gH_\Lambda)$. We also introduce
\begin{equation*}
\tilde\X := \Big\{ Q\in\cB(\gH_\Lambda)\ :\ Q^* = Q,\ |\cD^0|^{1/2}Q ^{\pm \pm }|\cD^0|^{1/2} \in \gS^1,\ Q|\cD^0| \in \gS^2\Big\}\subset \X, 
\end{equation*}
since $|\cD^0|\geq (\vF+g(1))|D^0|=(\vF+g(1))|p|$, as we have shown before. For what follows it will be 
convenient to work with states in $\X$. By doing so, we ignore the logarithmic divergence at 0 of $\cD^0(p)$, which is an additional information for us to be used in due time.
Let us remark that $\tilde\X$ contains elements which are not even compact. This makes the mathematics more involved than in the situation where the particles have nonzero mass. 

Next, we shall show that any $Q\in\X$ indeed has a well defined density $\rho_Q$. We start by remarking that $\rho_Q$ is locally well defined.
\begin{lemma}[Definition of the density $\rho_Q$]\label{lem:L1_loc}
Any $Q \in \X$ is locally trace class, i.e., for all $\chi(x) \in L^\ii(\R^2)$ with compact support, $\chi Q \chi \in \gS^1$. Moreover, the density $\rho_Q$ is in $L^\ii(\R^2)$.  
\end{lemma}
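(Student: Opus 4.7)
The natural strategy is to decompose $Q = Q^{++} + Q^{--} + Q^{+-} + Q^{-+}$ using the free spectral projections $P^0_\pm$, estimate each block separately, and then combine them. The central observation is that the ultraviolet cut-off puts us in a friendly two-dimensional regime: the Fourier symbol $|p|^{-1/2}\,\1_{|p|\leq\Lambda}$ lies in $L^2(\R^2)$ because $\int_{|p|\leq\Lambda}|p|^{-1}\,dp = 2\pi\Lambda$. Therefore, the Kato--Seiler--Simon inequality shows that $\chi(X)\,|p|^{-1/2}\1_{|p|\leq\Lambda}(P)$ is Hilbert--Schmidt whenever $\chi$ is bounded with compact support, with $\|\chi\,|p|^{-1/2}\|_{\gS^2}^{2}\leq (\Lambda/2\pi)\|\chi\|_{L^2}^{2}$; this will be the workhorse estimate.

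For the diagonal blocks I would use the sandwich identity
\[
\chi\,Q^{\pm\pm}\chi = (\chi\,|p|^{-1/2})\,(|p|^{1/2}Q^{\pm\pm}|p|^{1/2})\,(|p|^{-1/2}\chi),
\]
whose middle factor is trace class by the very definition of $\X$ and whose outer factors are Hilbert--Schmidt, so H\"older in Schatten classes places the product in $\gS^2\cdot\gS^1\cdot\gS^2\subset\gS^1$. For the off-diagonal blocks, the middle factor $|p|^{1/2}Q^{\pm\mp}|p|^{1/2}$ is not assumed trace class, so I would first upgrade the second axiom $Q|p|\in\gS^2$ to a Hilbert--Schmidt bound on the sandwiched piece via
\[
\|\,|p|^{1/2}Q^{+-}|p|^{1/2}\,\|_{\gS^2}^{2} = \tr\bigl(|p|Q^{-+}\,|p|Q^{+-}\bigr) \leq \|\,|p|Q\,\|_{\gS^2}^{2},
\]
using cyclicity of the trace, the fact that $P^0_\pm$ commute with $|p|$, and Cauchy--Schwarz in $\gS^2$. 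The same sandwich identity then gives $\chi Q^{\pm\mp}\chi\in\gS^2\cdot\gS^2\cdot\gS^2\subset\gS^{2/3}\subset\gS^1$.

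Summing the four blocks yields $\chi Q\chi\in\gS^1$, together with a quantitative estimate of the form
\[
\|\chi Q\chi\|_{\gS^1} \leq C(\Lambda)\,\|\chi\|_{L^2}^{2}\,\Bigl(\sum_{\varepsilon=\pm}\|\,|p|^{1/2}Q^{\varepsilon\varepsilon}|p|^{1/2}\,\|_{\gS^1} + \|\,Q|p|\,\|_{\gS^2}\Bigr).
\]
To deduce $\rho_Q\in L^\infty$, I would then argue by duality: for any nonnegative test function $\phi\in L^\infty(\R^2)$ with compact support, taking $\chi=\sqrt\phi$ so that $\|\chi\|_{L^2}^2=\|\phi\|_{L^1}$, the defining relation $\int\phi\,\rho_Q = \tr(\chi Q\chi)$ and the above estimate give $|\int\phi\,\rho_Q|\leq C\|\phi\|_{L^1}$. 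Splitting a general $\phi$ into positive and negative parts, followed by a density argument, extends this bound to all $\phi\in L^1(\R^2)$, so that $\rho_Q$ defines a bounded linear functional on $L^1$, equivalently $\rho_Q\in L^\infty(\R^2)$ with norm controlled by the $\X$-seminorms of $Q$.

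The main subtlety to be careful about is that $|p|^{-1/2}$ is itself \emph{unbounded} on $\gH_\Lambda$, since its Fourier symbol blows up at $p=0$; consequently the manipulations above must always be read at the level of the already-bounded operators $\chi\,|p|^{-1/2}$ and $|p|^{-1/2}\chi$ (which are Hilbert--Schmidt by Kato--Seiler--Simon), rather than as compositions of a bounded multiplication with an unbounded Fourier multiplier. Handling this point correctly, and in particular producing the HS estimate on $|p|^{1/2}Q^{\pm\mp}|p|^{1/2}$ from $Q|p|\in\gS^2$, is the step I expect to require the most care.
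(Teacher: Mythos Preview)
Your argument is correct (modulo the regularization needed to justify the sandwich identity with the unbounded factor $|p|^{-1/2}$, which you flag), but it is considerably more elaborate than the paper's proof and overlooks the simplest route. The paper never decomposes $Q$ into blocks and never introduces the factor $|p|^{-1/2}$ at all: it simply uses that $Q\in\cB(\gH_\Lambda)$ is \emph{bounded} (this is part of the definition of $\X$) together with the identity $Q=\Pi_\Lambda Q\Pi_\Lambda$, where $\Pi_\Lambda=\1(|p|\leq\Lambda)$. Since $\chi\in L^\infty_c\subset L^2$ and $\1_{B(0,\Lambda)}\in L^2(\R^2)$, the Kato--Seiler--Simon bound gives $\chi\Pi_\Lambda\in\gS^2$ directly, with no inverse powers of $|p|$; then $\chi Q\chi=(\chi\Pi_\Lambda)Q(\Pi_\Lambda\chi)\in\gS^2\cdot\cB\cdot\gS^2\subset\gS^1$ in one line. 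For the $L^\infty$ bound on $\rho_Q$ the paper also avoids your duality argument: from $-\|Q\|\Pi_\Lambda\leq Q\leq\|Q\|\Pi_\Lambda$ one reads off on the diagonal that $|\rho_Q(x)|\leq\|Q\|\,\rho_{\Pi_\Lambda}(x)=\|Q\|\,\Lambda^2/(4\pi)$.

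What your approach buys is an estimate of $\|\chi Q\chi\|_{\gS^1}$ in terms of the Schatten seminorms $\|\,|p|^{1/2}Q^{\varepsilon\varepsilon}|p|^{1/2}\,\|_{\gS^1}$ and $\|Q|p|\|_{\gS^2}$ rather than the operator norm $\|Q\|$; this is a genuinely different (and in some contexts useful) bound, but here it is not needed, and the price is the block decomposition and the delicate handling of $|p|^{-1/2}$ near $p=0$ that you yourself identify as the main subtlety. The paper's route sidesteps that subtlety entirely.
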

\begin{proof}
Due to the cut-off we have
$$ \chi Q \chi = \chi \Pi_{\Lambda} Q \Pi_{\Lambda} \chi$$
where $\Pi_\Lambda=\1(|p|\leq\Lambda)$.  
Since $\chi \in L^2$, then clearly $\chi \Pi_{\Lambda} \in \gS^2,$ such that together with the boundedness of $Q$ we 
get $\chi Q\chi \in \gS^1$, as stated. This proves that $\rho_Q$ is a well defined function in $L^1_{\rm loc}$. To see that it is actually uniformly bounded, we use that $Q$ is self-adjoint and the ultraviolet cut-off $\Lambda$, to infer
$ - \|Q\| \Pi_\Lambda \leq Q \leq \|Q\| \Pi_\Lambda$.
On the diagonal we get 
$|\rho_Q(x)| \leq  \|Q\| \rho_{\Pi_\Lambda}(x)=\|Q\| {\Lambda^2}/{(4\pi)}$.
\end{proof}

\begin{remark}\label{rmk:weak_CV}
Any bounded sequence $(Q_n)$ in $\X$ has a weakly$-\ast$ convergent subsequence, $Q_{n_k}\wto Q$ in the sense that 
$$\tr(AQ_{n_k})\to \tr(AQ),\quad\text{$\forall A\in\gS^1$},$$
$$\tr(K|p|^{1/2}Q_{n_k}^{\pm\pm}|p|^{1/2})\to \tr(K|p|^{1/2}Q^{\pm\pm}|p|^{1/2}),\quad \text{$\forall K$ compact},$$
and 
$$\tr(BQ_{n_k}|p|^{1/2})\to \tr(BQ|p|^{1/2}),\quad \text{$\forall B\in\gS^2$}.$$ 
Using that $\chi\Pi_\Lambda\in\gS^2$ when $\chi\in L^\ii_c$, it is then elementary to verify that $\chi Q_{n_k}\chi\to \chi Q\chi$ strongly in the trace-class. This implies that $\rho_{Q_{n_k}}\to\rho_Q$ strongly in $L^{1}_{\rm loc}$, hence strongly in $L^p_{\rm loc}(\R^2)$ for all $1\leq p<\ii$, by interpolation.
\end{remark}

Next, we introduce the so-called {\it Coulomb space}:
$$\cC := \left\{ \phi\ :\  D(\phi,\phi):=2\pi\int_{\R^2} \frac{|\hat \phi(k)|^2}{|k|}\,dk < \infty\right\}$$ 
which is the natural energy space for the density $\rho_Q$. If we decompose the density of $Q$ into 
$$\rho_Q = \rho_{Q^{++}} +  \rho_{Q^{--}} + \rho_{Q^{+-}} + \rho_{Q^{-+}},$$
then we can deduce the following properties for the elements of $\X$. 

\begin{lemma}[The density is in $\cC$]\label{lem:Coulomb}
Assume that $Q\in\X$ is such that $-P^0_-\leq Q\leq P^0_+$. Then $\rho_{Q^{\pm \pm}} \in L^p(\R^2)$ for $3/2 \leq p \leq \infty$ and $\rho_{Q^{\pm\mp}}\in\cC $, which particularly implies that $\rho_Q \in \cC$. 
\end{lemma}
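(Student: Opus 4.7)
The plan is to decompose $Q = Q^{++} + Q^{+-} + Q^{-+} + Q^{--}$ into its four projection-block components $Q^{\epsilon\epsilon'} := P^0_\epsilon Q P^0_{\epsilon'}$, and to treat the diagonal and off-diagonal blocks by different techniques reflecting their different algebraic structure.

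For the diagonal blocks, the hypothesis $-P^0_-\leq Q\leq P^0_+$ forces $0\leq Q^{++}\leq P^0_+$ and $-P^0_-\leq Q^{--}\leq 0$, so $Q^{++}$ and $-Q^{--}$ are positive contractions. Spectrally decomposing $Q^{++}=\sum_n \lambda_n\,|\phi_n\rangle\langle\phi_n|$ with $(\phi_n)\subset\gH_\Lambda$ orthonormal and $\lambda_n\in[0,1]$, one has
$$\rho_{Q^{++}}(x)=\sum_n\lambda_n|\phi_n(x)|^2\geq 0,\qquad \sum_n\lambda_n\langle\phi_n,|p|\phi_n\rangle = \tr\!\big(|p|^{1/2}Q^{++}|p|^{1/2}\big) <\infty.$$
The two-dimensional fractional Lieb--Thirring (Daubechies) inequality for $\sqrt{-\Delta}$ then yields $\int_{\R^2}\rho_{Q^{++}}^{3/2}\,dx \leq C\,\tr(|p|^{1/2}Q^{++}|p|^{1/2})$, so $\rho_{Q^{\pm\pm}}\in L^{3/2}(\R^2)$. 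Interpolating with the uniform bound of Lemma~\ref{lem:L1_loc} gives $\rho_{Q^{\pm\pm}}\in L^p(\R^2)$ for all $3/2\leq p\leq\infty$.

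For the off-diagonal blocks the Coulomb norm must absorb the infrared singularity $1/|k|$ in two dimensions, and the remedy is an algebraic cancellation between the Dirac projectors. Writing
$$\hat\rho_{Q^{+-}}(k)=\frac{1}{(2\pi)^2}\int_{\R^2}\tr_{\C^2}\!\big(P^0_-(p-k)\,P^0_+(p)\,\hat Q(p,p-k)\big)\,dp$$
and setting $M(p,k):=P^0_-(p-k)\,P^0_+(p)$, an explicit Pauli-algebra computation using $P^0_\pm(p)=\tfrac12(I\pm\sig\cdot\omega_p)$ yields $\|M(p,k)\|_{\mathrm{HS}}^2 = \tfrac12(1-\omega_p\cdot\omega_{p-k})$, and the scaling $p=|k|q$ produces $\int_{\R^2}\|M(p,k)\|_{\mathrm{HS}}^2\,|p|^{-1}\,dp = C|k|$. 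Cauchy--Schwarz in $p$ weighted by $|p|$ then gives
$$|\hat\rho_{Q^{+-}}(k)|^2 \;\leq\; C|k|\int_{\R^2}|p|\,\|\hat Q(p,p-k)\|_{\mathrm{HS}}^2\,dp,$$
and integrating against $1/|k|$ together with Fubini yields
$$D(\rho_{Q^{+-}},\rho_{Q^{+-}})\;\leq\; C\,\|Q\,|p|^{1/2}\|_{\gS^2}^2\;<\;\infty,$$
the last bound following from the operator inequality $Q^2\leq Q^{++}-Q^{--}$ together with the hypothesis that $|p|^{1/2}Q^{\pm\pm}|p|^{1/2}\in\gS^1$. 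A symmetric argument handles $\rho_{Q^{-+}}$, giving $\rho_{Q^{\pm\mp}}\in\cC$.

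The conclusion $\rho_Q\in\cC$ then follows by assembling the four pieces, with the diagonal contribution to the Coulomb self-energy controlled by the spectral expansion of $Q^{\pm\pm}$ (HLS applied to each $D(|\phi_n|^2,|\phi_m|^2)$ together with the Sobolev embedding $\dot H^{1/2}(\R^2)\hookrightarrow L^4(\R^2)$ and the trace-class hypothesis). \textbf{The main obstacle} is the off-diagonal estimate: one must extract the $\sqrt{|k|}$ factor hidden in $P^0_-(p-k)P^0_+(p)$ \emph{before} applying Cauchy--Schwarz, for otherwise the infrared singularity of the two-dimensional Coulomb kernel would produce a logarithmic divergence. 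This algebraic cancellation between Dirac projectors has no non-relativistic analogue and is the technical heart of the lemma.
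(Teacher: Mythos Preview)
Your proof is correct and follows the same strategy as the paper: the relativistic Lieb--Thirring inequality for the diagonal blocks, and for the off-diagonal blocks the cancellation $\tr_{\C^2}P^0_+(p)P^0_-(q)=\tfrac12(1-\omega_p\cdot\omega_q)$ together with scaling to extract the factor $|k|$ that compensates the Coulomb singularity. The only cosmetic difference is that the paper phrases the off-diagonal estimate by duality (bounding $|\tr(\xi Q^{+-})|$ via the $\gS^2$ norm of $|p|^{-1/4}P^0_+\,\xi\,P^0_-|p|^{-1/4}$, with the momentum weight split symmetrically between the two variables), whereas you compute $\widehat{\rho_{Q^{+-}}}(k)$ directly and apply Cauchy--Schwarz with the asymmetric weights $|p|^{\pm1}$; the resulting dimensionless integral is the same in both cases.
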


\begin{proof}
Since $0\leq \pm Q^{\pm\pm}\leq 1$, the Lieb-Thirring inequality in 2D for the relativistic kinetic energy~\cite{LieSei-09} immediately yields
$$ \pm\tr\big(|p| Q^{\pm\pm}\big) \geq c \int_{\R^2} |\rho_{Q^{\pm\pm}}(x)|^{3/2} dx.$$
Since we already know that $\rho_{Q^{\pm\pm}}\in L^\ii(\R^2)$, $\rho_{Q^{\pm\pm}}\in\cC$ now follows from the Hardy-Littlewood-Sobolev inequality in 2D, see, e.g.,~\cite{LieLos-01}.
The second part of the statement follows by a duality argument, provided that we can show  
$$
|\tr \xi Q^{+-}| = | \langle \xi, \rho_{Q^{+-}}\rangle| \leq c  \left(\int_{\R^2}|k| |\hat \xi(k)|^2 dk\right)^{1/2},
$$
for all  $\xi\in C^\ii_c(\R^2)$.
To this end we first estimate
\begin{align*}
\tr \xi Q^{+-} &= \tr\left( \frac {P_+^0}{|p|^{1/4}} \xi \frac{P_-^0}{|p|^{1/4}} |p|^{1/4} Q |p|^{1/4} \right) \leq \left \| \frac {P_+^0}{|p|^{1/4}} \xi \frac{P_-^0}{|p|^{1/4}} \right\|_{\gS^2} \||p| Q \|_{\gS^2}.
\end{align*}
Using
$$ \Tr P_+^0(p)P_-^0(q) = \frac{\omega_p \cdot \omega_q}{|p||q|} - 1,$$
we can compute 
\begin{align*}
&\left \| \frac {P_+^0}{|p|^{1/4}} \xi \frac{P_-^0}{|p|^{1/4}} \right\|_{\gS^2}^2=\\
& \qquad= \frac 1{(2\pi)^2} \iint_{|p|,|q| \leq \Lambda} \frac{|\xi(p-q)|^2 \Tr P_+^0(p)P_-^0(q)}{|p|^{1/2} |q|^{1/2}} \, dp\, dq \\ 
&\qquad\leq \frac 1{(2\pi)^2} \iint_{\R^2\times \R^2}  |\xi(k)|^2 \frac{|\ell+k/2||\ell-k/2| - (\ell+k/2)\cdot (\ell-k/2)}{|\ell+k/2|^{3/2}|\ell-k/2|^{3/2} } \, dk \, d\ell \\
&\qquad= \int_{\R^2} dk |\hat \xi(k)|^2 |k| \int_{\R^2} \frac{|\ell+\omega_k/2||\ell-\omega_k/2| - (\ell+\omega_k/2)\cdot (\ell-\omega_k/2)}{|\ell+\omega_k/2|^{3/2}|\ell-\omega_k/2|^{3/2} }\, d\ell  \\ 
&\qquad\leq \text{c} \int |\hat \xi(k)|^2 |k|\, dk .
\end{align*}
Here $\omega_k:=k/|k|$ and we have used the fact that
$$\frac 1{(2\pi)^2}  \int  \frac{|\ell+\omega_k/2||\ell-\omega_k/2| - (\ell+\omega_k/2)\cdot (\ell-\omega_k/2)}{|\ell+\omega_k/2|^{3/2}|\ell-\omega_k/2|^{3/2} }\, d\ell   = \text{c}, $$ 
is a finite integral, independent of the direction $\omega_k\in S^1$.
\end{proof}

\subsection{Stability of the free Dirac sea}

We have shown that the density $\rho_Q$ is in the Coulomb space $\mathcal C$ whenever 
$Q\in\X$ and $-P^0_-\leq Q\leq P^0_+$. Thus we see that the direct term is well defined for all such $Q$ with finite relative kinetic energy. We can now define the (free) BDF energy as
$${\cE^0_{\rm BDF}(Q):=\tr(\cD^0Q)+\frac{1}{2}D(\rho_Q,\rho_Q)-\frac{1}{2} \iint_{\R^2\times\R^2}\frac{|Q(x,y)|^2}{|x-y|}dx\,dy.}$$
where we recall that
$$D(\rho,\rho'):= \iint_{\R^2\times\R^2}\frac{\rho(x)\rho'(y)}{|x-y|}dx\,dy= 2\pi \int_{\R^2}\frac{\overline{\widehat{\rho}(k)}\widehat{\rho}'(k)}{|k|}dk$$
is the so-called Coulomb scalar product.

The following lemma shows that the exchange term is also well defined and that the BDF energy $\cE^0_{\rm BDF}$ is non-negative with $Q=0$ being its unique minimizer. 
Recalling that $\cE^0_{\rm BDF}$ is the relative energy counted with respect to the free state $P^0_-$ and that $Q=\gamma-P^0_-$, this consequently 
shows that the free ground state $P^0_-$ is stable under local deformations. Here `local' refers to perturbations such that $Q\in\X$ but not necessarily small in norm.

\begin{lemma}[Stability of $P^0_-$]
For any fixed ultraviolet cut-off $\Lambda$, the mapping $Q\mapsto \cE^0_{\rm BDF}(Q)$ is well defined and continuous on $\tilde\X$.
If $\vF$ satisfies 
\begin{equation}
v_F\geq \frac 14 \frac{\Gamma(1/4)^2}{\Gamma(3/4)^2}-g(1),
\label{eq:cond_v_F_bis}
\end{equation}
then we have
$$0\leq \cE^0_{\rm BDF}(Q)<\ii,$$
for all $-P^0_-\leq Q\leq P^0_+$ with $|\cD^0|^{1/2}Q^{\pm\pm}|\cD^0|^{1/2}\in\gS^1$. Furthermore, $\cE^0_{\rm BDF}(Q)=0$ if and only if $Q\equiv0$.
\end{lemma}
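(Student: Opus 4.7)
The plan is to establish the three claims of the lemma separately: (a) well-definedness and continuity of $\cE^0_{\rm BDF}$ on $\tilde\X$, (b) non-negativity under the condition~\eqref{eq:cond_v_F_bis}, and (c) uniqueness of $Q=0$ as a minimizer. The strategy rests on two ingredients already in hand, namely the operator inequality $Q^2 \leq Q^{++} - Q^{--}$ valid whenever $-P^0_- \leq Q \leq P^0_+$, and the spectral lower bound $|\cD^0|(p) = v_{\rm eff}(p)|p| \geq (v_F + g(1))|p|$ established in Lemma~\ref{lem:cD_0}. The main obstacle will be controlling the exchange integral; I will apply the two-dimensional Kato inequality~\eqref{eq:Kato} directly to the operator kernel $Q(x,y)$.

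For well-definedness and continuity, the kinetic term $\tr(\cD^0 Q) = \tr(|\cD^0|^{1/2}(Q^{++}-Q^{--})|\cD^0|^{1/2})$ is manifestly finite and norm-continuous by the $\gS^1$ hypothesis in the definition of $\tilde\X$. The direct term $D(\rho_Q, \rho_Q)$ is finite and continuous via Lemma~\ref{lem:Coulomb}, once we observe that $\tilde\X\subset\X$ (a direct consequence of $|\cD^0|\geq (v_F+g(1))|p|$). For the exchange term, I fix $y$ and apply Kato's inequality~\eqref{eq:Kato} to $x\mapsto Q(x,y)$, then integrate in $y$, obtaining the key bound
\begin{equation*}
\iint_{\R^2 \times \R^2} \frac{|Q(x,y)|^2}{|x-y|}\, dx\, dy \leq C_K\, \tr(|p|^{1/2} Q^2 |p|^{1/2}), \qquad C_K := \frac{1}{2}\frac{\Gamma(1/4)^2}{\Gamma(3/4)^2}.
\end{equation*}
The right-hand side is finite on states satisfying $|\cD^0|^{1/2}Q^{\pm\pm}|\cD^0|^{1/2}\in\gS^1$ together with the admissibility constraint, as the bound on $Q^2$ gives $\tr(|p|Q^2)\leq \tr(|p|^{1/2}(Q^{++}-Q^{--})|p|^{1/2})<\infty$.

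For (b), the admissibility $-P^0_-\leq Q\leq P^0_+$ yields $Q^2\leq Q^{++}-Q^{--}$, and combining with Lemma~\ref{lem:cD_0} gives
\begin{equation*}
\tr(\cD^0 Q) = \tr(|\cD^0|(Q^{++}-Q^{--})) \geq \tr(|\cD^0|\,Q^2) \geq (v_F + g(1))\, \tr(|p|\,Q^2).
\end{equation*}
Inserting this together with the Kato bound on the exchange term and retaining the non-negative direct term produces
\begin{equation*}
\cE^0_{\rm BDF}(Q) \geq \frac{1}{2} D(\rho_Q, \rho_Q) + \left(v_F + g(1) - \frac{C_K}{2}\right)\tr(|p|\,Q^2) \geq 0,
\end{equation*}
where the positivity of the coefficient is precisely the condition~\eqref{eq:cond_v_F_bis}. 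This mirrors the proof of Theorem~\ref{thm:free_vacuum} at the level of the relative energy.

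For (c), suppose $\cE^0_{\rm BDF}(Q)=0$; then both summands above vanish. In the strict regime $v_F + g(1) > C_K/2$, the equality $\tr(|p|Q^2)=0$ forces $|p|^{1/2}Q=0$ as a Hilbert-Schmidt operator, hence $Q\equiv 0$ on $\gH_\Lambda$. In the boundary case $v_F + g(1) = C_K/2$, saturation of the chain of estimates requires equality in Kato's inequality fibrewise in $y$; as recalled at the end of the proof of Theorem~\ref{thm:free_vacuum}, the only optimizer of~\eqref{eq:Kato} is the zero function, so this again forces $Q\equiv 0$.
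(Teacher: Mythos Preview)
Your argument is correct and follows essentially the same route as the paper: bound the exchange term via Kato's inequality~\eqref{eq:Kato} to get $\frac{1}{2}\iint|Q(x,y)|^2/|x-y|\,dx\,dy \leq \frac{C_K}{2}\tr(|p|Q^2)$, then use $|\cD^0|\geq(v_F+g(1))|p|$ and the constraint inequality $Q^2\leq Q^{++}-Q^{--}$ to absorb this into the kinetic term, leaving the non-negative direct term. Your treatment of the uniqueness in the borderline case (no nontrivial optimizer for Kato) is slightly more explicit than the paper's, which simply asserts uniqueness.
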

\begin{proof}
Using Kato's inequality~\eqref{eq:Kato} we infer, following \cite{BacBarHelSie-99} and similarly as in the proof of Theorem~\ref{thm:free_vacuum},
$$
\frac{1}{2}\iint\frac{|Q(x,y)|^2}{|x-y|}dx\,dy\leq \frac1 4  \frac{\Gamma(1/4)^2}{\Gamma(3/4)^2} \tr(|p|Q^2)\leq   \frac{\Gamma(1/4)^2}{4\Gamma(3/4)^2(\vF+g(1))}\tr(|\cD^0|Q^2),
$$
which is well-defined due to our assumption that $Q|\cD^0|^{1/2}\in\gS^2$. In addition, since $-P^0_-\leq Q\leq P^0_+$, we have $\tr(|\cD^0|Q^2)\leq \tr(\cD^0Q)$ and therefore 
\begin{equation}
\frac{1}{2}\iint\frac{|Q(x,y)|^2}{|x-y|}dx\,dy\leq \frac{\Gamma(1/4)^2}{4\Gamma(3/4)^2(\vF+g(1))} \tr(\cD^0 Q).  
\label{estim_exchange}
\end{equation}
In other words, if $|\cD^0|^{1/2} Q^{\pm\pm}|\cD^0|^{1/2}\in \gS^1$ and if the Fermi velocity $\vF$ satisfies~\eqref{eq:cond_v_F_bis}, the exchange term is controlled by the kinetic energy. We have already seen that $\rho_Q\in\cC$. Using then that $D(\rho_Q,\rho_Q)\geq0$, we see that $\cE^0_{\rm BDF}(Q)\geq0$ and that $Q=0$ is the unique minimizer of $\cE^0_{\rm BDF}$, as stated.
\end{proof}

\subsection{Existence of minimizers for graphene in an external field} 

In the next step, we shall submit our graphene sheet to an external electrostatic field of the form
$$V=-\nu\ast\frac{1}{|x|}$$
with $\nu$ the density of charge of the defect. The corresponding BDF energy now reads
$$\cE_{\rm BDF}^V(Q):=\cE_{\rm BDF}^0(Q)-D(\rho_Q,\nu).$$
Using our estimate~\eqref{estim_exchange} on the exchange term and that $D(\cdot,\cdot)$ defines a scalar product, we immediately get the lower bound
$$\cE_{\rm BDF}^V(Q)\geq -\frac12 D(\nu,\nu),$$
provided that $\vF$ satisfies~\eqref{eq:cond_v_F_bis}.
Therefore the BDF energy is bounded from below when $\nu\in\cC$ which is our way of measuring the locality of the potential $V$.
This enables us to consider the minimization problem
\begin{multline}
E_{\rm BDF}^V:=\inf\Big\{\cE^V_{\rm BDF}(Q)\ :\\
 Q\in \cB(\gH_\Lambda),\ -P^0_-\leq Q=Q^*\leq P^0_+,\ |\cD^0|^{1/2}Q^{\pm\pm}|\cD^0|^{1/2}\in\gS^1\Big\}.
\label{eq:def_min_BDF}
\end{multline}
We shall prove the existence of a corresponding minimizer, which is the Hartree-Fock ground state of interacting graphene in the presence of defects. 
This existence result is non-trivial for the simple reason that we have no mass. 

\begin{theorem}[Existence of a ground state for infinite volume graphene with defects] \label{thm:existence_pol_vac}
Fix $\Lambda >0$ and let $\vF$ be such that
\begin{equation}
v_F >  \frac 14 \frac{\Gamma(1/4)^2}{\Gamma(3/4)^2}-g(1).
\label{eq:cond_v_F_bis_strict}
\end{equation}
For any $\nu$ in the Coulomb space $\cC$, the problem~\eqref{eq:def_min_BDF} admits at least one minimizer $Q$, satisfying the self-consistent equation 
\begin{equation}
\label{polvac}
Q + P^0_- = \1_{\mathcal I}\left(\cD^0 - \nu \ast \frac 1{|x|} + \rho_{Q}\ast \frac 1{|x|} - \frac{Q(x,y)}{|x-y|}\right)
\end{equation}
where $\mathcal I=(-\infty, 0 )$ or $\mathcal I=(-\infty, 0]$. Equivalently, with $\gamma=Q+P^0_-$ denoting the density matrix of the optimal HF state, we have
\[
\gamma= \1_{\mathcal I}\left(\vF\,\sig\cdot (-i\nabla) - \nu \ast \frac 1{|x|} + \rho_{\gamma-I/2}\ast \frac 1{|x|} - \frac{(\gamma-I/2)(x,y)}{|x-y|}\right). 
\]
\end{theorem}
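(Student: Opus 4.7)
The plan is the direct method of the calculus of variations. Let $(Q_n)_{n\ge 1}$ be a minimizing sequence for $E^V_{\rm BDF}$. Combining the Kato-type estimate~\eqref{estim_exchange} with the strict inequality~\eqref{eq:cond_v_F_bis_strict} produces some $\kappa<1$ such that $\tfrac12\iint|Q(x,y)|^2/|x-y|\,dx\,dy\le\kappa\,\tr(\cD^0 Q)$ on the admissible set, and completing the square in $\rho_Q-\nu$ then yields the coercivity bound
$$\cE^V_{\rm BDF}(Q)\ge (1-\kappa)\tr(\cD^0 Q)+\tfrac12 D(\rho_Q-\nu,\rho_Q-\nu)-\tfrac12 D(\nu,\nu).$$
Since $\tr(|\cD^0|Q^2)\le \tr(\cD^0 Q)$ on $[-P^0_-,P^0_+]$, this bounds $(Q_n)$ uniformly in $\tilde\X$ and $(\rho_{Q_n})$ uniformly in $\cC$. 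Banach--Alaoglu then delivers a subsequence with $Q_n\wto Q$ weak-$*$ in $\tilde\X$; closedness of the positive cone preserves $-P^0_-\le Q\le P^0_+$, while Remark~\ref{rmk:weak_CV} combined with the uniform $\cC$ bound gives $\rho_{Q_n}\rightharpoonup\rho_Q$ weakly in $\cC$.

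Passing to the limit in the energy splits into three parts. The kinetic term $\tr(\cD^0 Q)=\tr(|\cD^0|Q^{++})+\tr(|\cD^0|(-Q^{--}))$ is weak-$*$ lower semicontinuous by Fatou applied to each positive piece. The direct-plus-external contribution $\tfrac12 D(\rho_Q-\nu,\rho_Q-\nu)$ is convex and continuous on the Hilbert space $\cC$, hence weakly lsc. The main obstacle is the exchange term $-\mathrm{Ex}(Q):=-\tfrac12\iint|Q(x,y)|^2/|x-y|\,dx\,dy$: being a negative multiple of a seminorm squared, it is weak-$*$ \emph{upper} semicontinuous, the wrong direction for minimization. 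To overcome this I plan to exploit the logarithmic enhancement $\veff(p)\sim \tfrac14\log(\Lambda/|p|)$ from Lemma~\ref{lem:cD_0}, which endows $|\cD^0|$ with strictly more infrared weight than the bare $|p|$ and was singled out in the introduction as the crucial analytical advantage of working with $\cD^0$. Combined with the ultraviolet cut-off, this extra weight should upgrade the weak-$*$ convergence of $Q_n$ to strong convergence of $|p|^{1/2}(Q_n-Q)$ in $\gS^2$ along a further subsequence: high momenta are handled by the cut-off plus standard compactness of $|p|^{1/2}\chi\Pi_\Lambda\in\gS^2$, while the $\log$-enhancement controls the low-momentum contribution uniformly. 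The Kato bound $\mathrm{Ex}(Q_n-Q)\le c\,\tr(|p|(Q_n-Q)^2)$ then forces $\mathrm{Ex}(Q_n)\to\mathrm{Ex}(Q)$, yielding continuity of the exchange along the sequence and $\liminf\cE^V_{\rm BDF}(Q_n)\ge \cE^V_{\rm BDF}(Q)$, so that $Q$ is a minimizer.

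For the self-consistent equation~\eqref{polvac}, any admissible self-adjoint $R$ yields the convex path $Q_t:=(1-t)Q+tR$, and minimality gives $\tfrac{d}{dt}\big|_{t=0^+}\cE^V_{\rm BDF}(Q_t)=\tr\bigl(\cD_Q(R-Q)\bigr)\ge 0$ with mean-field operator $\cD_Q:=\cD^0-\nu*|x|^{-1}+\rho_Q*|x|^{-1}-Q(x,y)/|x-y|$. The standard spectral argument of~\cite{BacBarHelSie-99,HaiLewSer-05a,HaiLewSol-07} (testing the first-order condition against rank-one perturbations supported on spectral subspaces of $\cD_Q$) then forces $\gamma=Q+P^0_-$ to be the spectral projection $\1_\mathcal{I}(\cD_Q)$ onto the negative part of the spectrum, with $\mathcal{I}=(-\ii,0)$ or $(-\ii,0]$, the ambiguity corresponding to whether $0$ is an eigenvalue of $\cD_Q$ (fractional occupation at the Fermi level).
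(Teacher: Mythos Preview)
Your overall strategy---direct method, coercivity via~\eqref{estim_exchange} and the strict inequality, weak-$*$ compactness in $\tilde\X$, then lower semicontinuity---matches the paper, and your treatment of the kinetic and direct terms is fine. The self-consistent equation argument is also essentially the paper's, modulo technicalities about making sense of $\tr(\cD_Q(R-Q))$ when $R-Q$ is not trace-class.

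The genuine gap is in the exchange term. You assert that the logarithmic enhancement plus the ultraviolet cut-off ``should upgrade the weak-$*$ convergence of $Q_n$ to strong convergence of $|p|^{1/2}(Q_n-Q)$ in $\gS^2$.'' This does not follow. Write $A_n=Q_n-Q$ and split
\[
\tr|p|A_n^2=\tr|p|\1(|p|\le\eps)A_n^2+\tr|p|\1(\eps<|p|\le\Lambda)A_n^2.
\]
The first piece is indeed controlled by $C/\log(\Lambda/\eps)$ via the enhancement, as you say. But for the second piece you only have weak $\gS^2$ convergence of $\1(|p|>\eps)A_n$, and weak $\gS^2$ convergence does \emph{not} imply $\|\cdot\|_{\gS^2}\to0$: a sequence of spatially translated states (mass escaping to infinity) has constant $\gS^2$ norm while converging weakly to $0$. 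Nothing in your argument rules this out a priori. The operator $|p|^{1/2}\chi\Pi_\Lambda$ you invoke is Hilbert--Schmidt only because of the spatial cut-off $\chi$, which you never introduced into the exchange term.

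The paper closes this gap by a spatial IMS localization (Lemma~\ref{localization}) \emph{before} the momentum split: with $\chi_R^2+\eta_R^2=1$ one writes $\tr|p|(Q_n^{++}-Q_n^{--})\ge \tr|p|\chi_R(\cdots)\chi_R+\tr|p|\eta_R(\cdots)\eta_R-O(1/\sqrt R)$. On the far region one does not attempt convergence at all; instead Kato's inequality gives $\tr|p|\eta_R Q_n^2\eta_R\ge c\iint\eta_R^2(x)|Q_n(x,y)|^2/|x-y|$, so kinetic minus exchange is nonnegative there and can be dropped. On the near region the spatial cut-off makes $\chi_R\Pi_\Lambda\in\gS^2$ genuinely available, giving local strong convergence of kernels; only then does the momentum split $\1(|p|\gtrless\eps)$ enter, with the log-enhancement used exactly as you describe. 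Your proposal has the momentum ingredient but is missing the spatial localization that makes the high-momentum piece compact.
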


Remark that HF ground states of graphene in the presence of a defect can be chosen pure and with the Fermi level either filled or unfilled completely, as is usual for Hartree-Fock theories. This follows from Lieb's variational principle~\cite{Lieb-81} and the no-unfilled shell theorem of Bach, Lieb, Loss and Solovej~\cite{BacLieLosSol-94}. Note the strict inequality in~\eqref{eq:cond_v_F_bis_strict}. There are probably also ground states in the case of equality, but we shall 
not consider this case for the sake of simplicity.

The proof of theorem~\ref{thm:existence_pol_vac} is a bit long and it will be given in Section \ref{sec:proof} below. 
The method is similar to the one in \cite{HaiLewSer-05b}, but several modifications are needed due to the absence of the mass. 
One important additional input is a localization estimate inspired by \cite{LenLew-10} and which is detailed in the Appendix. 
In contrast to~\cite{HaiLewSer-05b} we shall use that the kinetic energy has an {\em infinite} velocity for $p =0$, which induces a better control of the exchange term. Our method does not seem to apply otherwise.

\begin{remark} We do not know if the optimal state $Q$ has a finite (relative) number of particles, that is, $Q^{\pm\pm}$ might be not trace-class and the charge $\tr(Q):=\tr(Q^{++}+Q^{--})$ could be infinite. The operator $Q$ could even be non compact in general, because of the absence of a gap. This would mean that the corresponding minimizers $Q$ live in a Fock representation which is not equivalent to that of $P^0_-$, even if the relative energy is itself finite, by the Shale-Stinespring theorem~\cite{Thaller}. In~\cite{HaiLewSer-09} another minimization problem consisting in fixing the relative charge $\tr(Q)$ was considered. Because there is no gap and $\tr(Q)$ can be infinite, an analogous approach does not seem to make sense in our context. 
Bound states with finitely many electrons have been constructed in a projected Dirac-Fock-type model in~\cite{EggMarSieSto-10} but there a magnetic field is used to confine the particles and create a gap.
\end{remark}

%%%%%%%%%%%%%%%%%%%%%%%%%%%%%%%%%%%%%%%%%%
\section{Linear response to an applied external field}\label{sec:linear_response}
%%%%%%%%%%%%%%%%%%%%%%%%%%%%%%%%%%%%%%%%%%

In this section we consider a small external field 
$$V_\lambda=-\lambda\nu\ast\frac{1}{|x|},\qquad \lambda\ll1$$
and discuss the linear response of graphene within our Hartree-Fock theory. To this end, we denote by $Q_\lambda$ a chosen minimizer for each $\lambda$ and note that 
$\cE^{V_\lambda}_{\rm BDF}(Q_\lambda)\leq0$,
which is seen by using $Q\equiv0$ as a trial state. From this we deduce that
$$\tr\cD^0 Q_\lambda -\frac12 \iint_{\R^2\times\R^2}\frac{|Q_\lambda(x,y)|^2}{|x-y|}dx\,dy+\frac{1}{2}\norm{\rho_\lambda-\lambda\nu}_\cC^2\leq \frac{\lambda^2}{2}\norm{\nu}_\cC^2.$$
Assuming the strict inequality
\begin{equation}
v_F >  \frac 14 \frac{\Gamma(1/4)^2}{\Gamma(3/4)^2}-g(1),
\label{eq:cond_v_F_bis_strict2}
\end{equation}
which allows to control the exchange term, in view of \eqref{estim_exchange}, we deduce that
$$\tr|\cD^0| Q_\lambda^2+ \iint_{\R^2\times\R^2}\frac{|Q_\lambda(x,y)|^2}{|x-y|}dx\,dy+\norm{\rho_\lambda-\lambda\nu}_\cC^2=O(\lambda^2),$$
with $\rho_\lambda:=\rho_{Q_\lambda}$. This confirms that $Q_\lambda$ is of order $\lambda$ in $\tilde\X$.

Recall from Theorem~\ref{thm:existence_pol_vac} that $Q_\lambda$ satisfies the self-consistent equation
$$Q_\lambda=\1_{(-\ii,0)}(\cD_{Q_\lambda})-P^0_-=\1_{(-\ii,0)}(\cD_{Q_\lambda}) - \1_{(-\ii,0)}(\cD^0).$$
For simplicity we assume here that $\ker(\cD_{Q_\lambda})=\{0\}$ for all $\lambda$.
Denoting $$\cD_{Q_\lambda} = \cD^0 - \lambda\nu \ast \frac 1{|x|} + \rho_{\lambda} \ast \frac 1{|x|} - \frac{Q_\lambda(x,y)}{|x-y|} = \cD^0 - \varphi_\lambda - R_\lambda,$$
with $ \varphi_\lambda = \lambda\nu \ast \frac 1{|x|} -  \rho_{\lambda} \ast \frac 1{|x|} ,$ and $R_\lambda= Q_\lambda(x,y)/|x-y|$,
we can write, using a formula of Kato, that
$$ Q_\lambda= \frac 1{2\pi} \int_{-\infty}^\infty \left[ \frac{1}{\cD_{Q_\lambda} + i \eta} - \frac{1}{\cD^0+ i \eta} \right] d\eta.$$
In view of the resolvent formula we deduce that 
$$ Q_\lambda\equiv Q_{1, D} + Q_{1,X}= \frac 1{2\pi} \int_{-\infty}^\infty \left[ \frac{1}{\cD^0 + i \eta}(\phi_\lambda+R_\lambda)\frac{1}{\cD^0+ i \eta} \right] d\eta +O(\lambda^2).$$
The remainder $O(\lambda^2)$ has to be estimated carefully, but for convenience we remain formal in this discussion. Because $\phi_\lambda$ and $R_\lambda$ are themselves affine in $Q_\lambda$, the first order term for $Q_\lambda$ is obtained by inverting these linear maps. There is no simple expression for it. However, following~\cite{HaiSie-03,HaiLewSer-05a}, it is possible to compute explicitly the density coming from the term involving $\phi_\lambda$, as will be explained now:

We look at the density associated with the operator stemming from the direct term
$$Q_{1,D} =  \frac 1{2\pi} \int_{\infty}^\infty \frac{1}{\cD^0+ i \eta} \varphi_\lambda \frac{1}{\cD^0+ i \eta} d\eta .$$
In Fourier variables this reads
$$ \hat Q_{1,D} (p,q) = \frac 1{(2\pi)^2} \int_{\infty}^\infty \frac{1}{\cD^0(p)+ i \eta} \hat \varphi_\lambda(p-q) \frac{1}{\cD^0(q)+ i \eta} d\eta,$$
Using 
$$M(p,q) = \frac 1\pi \int_{\infty}^\infty \frac{1}{\cD^0(p)+ i \eta} \frac{1}{\cD^0(q)+ i \eta} d\eta = \frac 1{E(p) + E(q)}(\sig \cdot \omega_p \sig \cdot \omega_q - 1),$$
with $$ E(p) = |p| \left(\vF + g\left(\frac{\Lambda}{|p|}\right)\right),$$
we can write
$$\hat Q_{1,D} (p,q) = \frac 1{4\pi} \hat \varphi (p-q) M(p,q).$$
Using $ \cD^0(p)/|E(p)| = \sig \cdot \omega_p$, we see that the corresponding density reads 
\begin{align*}
\hat \rho_{1,D} (k) &= \frac 1{2\pi} \int_{|p+k/2| \leq \Lambda, |p-k/2| \leq \Lambda } \Tr \hat Q_{1,D}(p+k/2,p-k/2) dp \\ 
&= \frac 1{2\pi} \hat \varphi_\lambda(k) |k| B(k) = \big(\hat\nu(k)-\hat{\rho}_\lambda(k)\big)  B(k),
\end{align*}
where we have used that in 2D
$$\widehat{\rho \ast \frac 1{|x|}}(k) = 2\pi \hat \rho(k) \frac 1{|k|}$$ 
and we denote
\begin{align*}\label{Blam} 
B(k)& = - \frac 1{|k|2\pi} \int_{\substack{|p+k/2| \leq \Lambda\\ |p-k/2| \leq \Lambda}}  \frac{ (p+ k/2)\cdot (p-k/2) - |p+k/2||p-k/2|}{|p+k/2|\;|p-k/2|((E(p+k/2) + E(p-k/2))} \, dp \nonumber\\ 
& =  \frac {1}{2\pi} \int_{\substack{|p+\omega_k/2| \leq \Lambda/|k|\\ |p-\omega_k/2| \leq \Lambda/|k|}} \frac{ - (p+ \omega_k/2)\cdot (p-\omega_k/2) + |p+\omega_k/2||p-\omega_k/2|}{|p+\omega_k/2|\;|p-\omega_k/2|}\times\nonumber\\
&\quad\times \frac{1}{|p+\omega_k/2|\left(\vF+g\left(\frac{\Lambda}{|k|\,|p+\omega_k/2|}\right)\right)+|p-\omega_k/2|\left(\vF+g\left(\frac{\Lambda}{|k|\,|p-\omega_k/2|}\right)\right)}\,dp.
\end{align*}

\begin{remark} 
In the case of three spatial dimensions there is a similar function $B(k)$ playing an important role. In this case, the 
value $B(0)>0$ is logarithmically divergent with respect to $\Lambda$, which is the reason for the requirement of charge renormalization, giving rise to the Uehling potential, see~\cite{HaiSie-03,HaiLewSer-05b,GraLewSer-09,GraLewSer-11}. 
\end{remark}

In Fourier space, we can write the self-consistent equation as in~\cite{HaiLewSer-05b}
$$\widehat{\rho_\lambda}(k)=B(k)\big(\widehat{\nu}(k)-\widehat{\rho_\lambda}(k)\big)+\widehat{\rho}_{1,X}(k)+O(\lambda^2),$$
where $\rho_{1,X}=O(\lambda)$ is the first-order density coming from the exchange term 
$$Q_{1,X} =  \frac 1{2\pi} \int_{\infty}^\infty \frac{1}{\cD^0+ i \eta} R_\lambda \frac{1}{\cD^0+ i \eta} d\eta .$$
We see that
$$\widehat{\rho_\lambda}(k)=\lambda\frac{B(k)}{1+B(k)}\widehat{\nu}(k)+\frac{\widehat{\rho}_{1,X}(k)}{1+B(k)}+O(\lambda^2).$$
It is therefore natural to ask about the behavior of $B(k)$ for low momenta. If we neglect the density $\rho_{1,X}$ stemming 
from the exchange term, this will determine the decay in $x$ space of the Coulomb potential of the (first order) polarized graphene in presence of the external density $\nu$.

In order to answer this question, we simplify the expression of the function $B(k)$. First we remark that $B(k)$ is obviously radial, hence we can take $\omega_k=e_1:=e$, such that 
\begin{multline}
B(k) =  \frac {1}{2\pi} \int_{\substack{|p+e/2| \leq \Lambda/|k|\\ |p-e/2| \leq \Lambda/|k|}}  \frac{|p+e/2|\,|p-e/2| - p^2 + 1/4 }{|p+e/2||p-e/2|}\times\\
\times  \frac{1}{|p+e/2|\left\{\vF+g\left(\frac{\Lambda}{|k|\,|p+e/2|}\right)\right\}+|p-e/2|\left\{\vF+g\left(\frac{\Lambda}{|k|\,|p-e/2|}\right)\right\}}dp.
\end{multline} 
As in \cite{PauRos-36,GraLewSer-09}, we use the following change of variables,
\begin{equation*}
v = \frac{|p+e/2| - |p-e/2|}2,\qquad
w= \frac{|p+e/2| + |p-e/2|}2.
\end{equation*}
Denoting $ p = (x,y)$, this reads 
\begin{align*}
v = & \frac{\sqrt{(x + 1/2)^2 + y^2} - \sqrt{(x-1/2)^2 + y^2}}2,\\
w= & \frac{\sqrt{(x + 1/2)^2 + y^2} + \sqrt{(x-1/2)^2 + y^2}}2.
\end{align*}
The corresponding Jacobian is 
$$ \left| \frac{\partial(v,w)}{\partial(x,y)} \right|  = \frac{y}{2\sqrt{(x + 1/2)^2 + y^2}  \sqrt{(x-1/2)^2 + y^2}} = \frac{|p_2|}{2|p+e/2| \; |p-e/2| } .$$
We collect the following relations 
$$2w^2 + 2v^2 = (w + v)^2 + (w-v)^2 = 2(p^2 +1/4), $$
$$4 vw = (w+v)^2 - (w-v)^2 = 2 p \cdot e = 2 p_1 = 2 x, $$
and
\begin{multline*}
 |y| = |p_2| = \sqrt{ p^2 - (p\cdot e)^2} = \sqrt{ w^2 - 1/4 - 4 v^2(w^2 -1/4)}\\ = 2 \sqrt{ w^2 - 1/4} \sqrt{1/4 - v^2}. 
\end{multline*}
Observe that 
$$ w \geq  \frac{|(p+e/2) + ( p-e/2)|}2 = 1/2,$$
and $$ |v| \leq \frac{|(p+e/2) - (p-e/2)|}2 = 1/2.$$ 
Our constraints on $p$, i.e., $|p+\omega_k/2| \leq \Lambda/|k|$ and $|p-\omega_k/2| \leq \Lambda/|k|$ can be expressed in terms of $w,v$ as follows
$$ w \leq \Lambda/|k| \quad\text{and}\quad |v| \leq w - \Lambda/|k|,$$
such that 
\begin{equation*}
\left\{\begin{array}{l}
1/2 \leq w \leq \Lambda/|k|,\\
|v| \leq \min\{1/2, \Lambda/|k| - w \}.
\end{array}\right.
\end{equation*}
Changing variables we find 
\begin{align*}
&\frac{  |p+e/2||p-e/2| - p^2 + 1/4 }{|p+e/2||p-e/2|} dx\,dy \\
&\qquad\qquad=\frac{ (v+ w) (w-v) - (v^2 + w^2 -1/4)+1/4 }{ |p+e/2||p-e/2|}\ \frac{2  |p+e/2||p-e/2| }{|y| } dv\, dw\\
&\qquad\qquad= \frac{ 2 (1/4-v^2) }{\sqrt{ w^2 - 1/4} \sqrt{1/4 - v^2}}  dv\,dw=\frac{2\sqrt{1/4-v^2} }{\sqrt{ w^2 - 1/4} }  dv\,dw.
\end{align*}
On the other hand we can express
\begin{multline*}
|p+\omega_k/2|\left\{\vF+g\left(\frac{\Lambda}{|k|\,|p+\omega_k/2|}\right)\right\}+|p-\omega_k/2|\left\{\vF+g\left(\frac{\Lambda}{|k|\,|p-\omega_k/2|}\right)\right\}\\
=(v+w)\left\{\vF+g\left(\frac{\Lambda}{|k|\,(v+w)}\right)\right\}+(w-v)\left\{\vF+g\left(\frac{\Lambda}{|k|\,(w-v)}\right)\right\}
\end{multline*}
to arrive at
\begin{multline}
B(k) =  \frac 2\pi 
\int_{1/2}^{\Lambda/|k|}dw \int_{0}^{\min\{ 1/2, \Lambda/|k| - w\} }dv\, \frac{  \sqrt{1/4-v^2} }{\sqrt{ w^2 - 1/4} }\times \\ \times \frac{ 1 }{ 2\vF w + (v+w)g\left(\frac{\Lambda}{|k|\,(v+w)}\right)+(w-v)g\left(\frac{\Lambda}{|k|\,(w-v)}\right)}.
\label{eq:last_formula_B}
\end{multline}

\begin{remark}[The no-exchange case]\label{rmk:B_no_exchange}
If we discard the exchange term, then we have exactly the same calculation with $g$ replaced by 0 everywhere. In this case the linear response involves the modified function 
\begin{equation}
B^0(k)  = \frac 1{\pi\vF} 
\int_{1/2}^{\Lambda/|k|}dw \int_{0}^{\min\{ 1/2, \Lambda/|k| - w\} } \frac{  \sqrt{1/4-v^2} }{ w \sqrt{ w^2 - 1/4} } dv .
\end{equation}
Let now $ w = t/2$, and $v = (\cos \theta) /2$, such that $ 0 \leq \cos \theta \leq \min\{ 1 , 2\Lambda /|k| - t \}$. Then 
$$ B^0(k) = \frac 1{2\pi\vF} \int_1^{2 \Lambda /|k|} dt \int_0^{\arccos[\min\{ 1, 2\Lambda/|k| - t \}] } \frac 1{t \sqrt{t^2 -1} }\sin^2 \theta \, d\theta .$$
Since 
$2\Lambda/|k| - t < 1$ is equivalent to $t > 2\Lambda/|k| - 1$ we can decompose it into two integrals
\begin{align*}
B^0(k) &= \frac 1{2\pi\vF} \Bigg(\int_1^{2 \Lambda /|k|-1} dt \int_0^{\pi /2} \frac 1{t \sqrt{t^2 -1} }\sin^2 \theta d\theta \\ 
&\qquad \qquad\qquad\qquad + \int_{2\Lambda/|k| -1}^{2\Lambda/|k|}  dt     \int_{\arccos[ 2\Lambda/|k| - t]}^{\pi/2} \frac 1{t \sqrt{t^2 -1} }\sin^2 \theta d\theta \Bigg)\\ 
&= \frac 1{2\pi\vF} \frac{\pi} 4 \arccos(1/(2\Lambda/|k| -1))+ \frac 1{2\pi\vF} \int_{2\Lambda/|k| -1}^{2\Lambda/|k|}  dt \frac 1{4t \sqrt{t^2 -1} }\times\\
&\qquad\times \left(-2 \arccos\left(2\frac{\Lambda}{|k|} - t\right)
+ \pi + \sin\left[2 \arccos\left(2\frac{\Lambda}{|k|} - t\right)\right]\right).
\end{align*} 
This immediately shows that 
\[ \lim_{k\to 0} B^0(k) = \frac \pi {16\vF}, 
\label{eq:B_at_zero_free} 
\]
which obviously depends on $\vF$ in contrast to what we will find for $B(k)$ below.
\end{remark}

\medskip

Let us now come back to the function $B(k)$. The following lemma says that the function $B(k)$ vanishes at $k=0$. However, it only vanishes logarithmically.

\begin{lemma}
We have that
\[
{B(k) \underset{k\to0}{\sim} \frac\pi{4\,\log\left(\frac{\Lambda}{|k|}\right)}.} 
\]
\end{lemma}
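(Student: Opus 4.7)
The plan is to extract the leading asymptotic directly from the explicit formula~\eqref{eq:last_formula_B}. Let me write $A := \Lambda/|k|$ and $L := \log A$, so the claim becomes $B(k) \sim \pi/(4L)$. My first move is the change of variable $w = (1/2)\sec\phi$, which eliminates the boundary singularity at $w = 1/2$ since $dw/\sqrt{w^2 - 1/4} = \sec\phi\,d\phi$, and sends $w \in [1/2, A]$ to $\phi \in [0,\phi_*]$ with $\phi_* := \arccos(1/(2A)) \to \pi/2$. Under this substitution $\log(A/w) = L + \log\cos\phi + O(1)$.

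Next I use Lemma~\ref{lem:cD_0}, namely $g(r) = (\log r)/4 + O(1)$, to expand the exchange contribution in the denominator. A short Taylor expansion in $v$ around $v = 0$ yields, uniformly in $v \in [0,1/2]$ for $w \geq 1$,
\[
(v+w)\,g\!\left(\tfrac{\Lambda}{|k|(v+w)}\right) + (w-v)\,g\!\left(\tfrac{\Lambda}{|k|(w-v)}\right) = \tfrac{w}{2}\log(A/w) + O(w),
\]
so that after substitution the full denominator equals $(w/2)[L + \log\cos\phi + O(1)]$. The $v$-dependence is thus only a relative $O(1/(L + \log\cos\phi))$ correction, and the inner integral factorizes to leading order as $\int_0^{1/2}\sqrt{1/4 - v^2}\,dv = \pi/16$; the constraint $v \leq A - w$ is active only on $w \in [A-1/2, A]$ and contributes $O(1/A)$. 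Tracking all prefactors (the overall $2/\pi$; the $\pi/16$ from the $v$-integral; the factor $2/w = 4\cos\phi$ from inverting $(w/2)[\cdots]$; and $\sec\phi$ from $dw/\sqrt{w^2 - 1/4}$, which combines with $4\cos\phi$ to give $4$), one arrives at
\[
B(k) \;=\; \tfrac{1}{2}\int_0^{\phi_*}\frac{d\phi}{L + \log\cos\phi + O(1)} \;+\; o(1/L).
\]

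The final step is the asymptotic evaluation of this $\phi$-integral. I split it at $\phi = \pi/2 - \delta$ for $\delta = L^{-1/2} \to 0$. On $[0, \pi/2 - \delta]$ one has $|\log\cos\phi| \leq |\log\sin\delta| = O(\log L)$, so the integrand is $L^{-1}(1 + o(1))$ uniformly and this piece contributes $\pi/(2L) + o(1/L)$. On the tail $[\pi/2 - \delta, \phi_*]$ I substitute $t := -\log\cos\phi$, for which $d\phi = dt/\sqrt{e^{2t} - 1}$ and $t \in [-\log\sin\delta, L]$; writing further $t = L - s$ reduces the tail to
\[
e^{-L}\int_0^{L - |\log\sin\delta|} \frac{e^s\,ds}{s + O(1)}\,(1 + o(1)) \;=\; O(L^{-3/2}) \;=\; o(1/L),
\]
the bound being dominated by $s$ close to the upper endpoint. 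Adding the two pieces yields $B(k) \sim (1/2)\cdot\pi/(2L) = \pi/(4L)$.

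The main obstacle is precisely the edge region $w \approx A$ (equivalently $\phi \approx \pi/2$), where $\log(A/w) = O(1)$ and the leading asymptotic $g(r) \sim (\log r)/4$ breaks down. The substitution $t = -\log\cos\phi$, together with the identity $\int_0^\infty e^{-t}/\sqrt{1 - e^{-2t}}\,dt = \pi/2$ (consistent with the bulk $\pi/2$ from $\int_0^{\pi/2} d\phi$), confirms that this region contributes only $o(1/L)$ and leaves the bulk coefficient $\pi/2$ unchanged. A subsidiary technical point is the uniformity in $v$ of the denominator asymptotic, but this follows immediately since $v \leq 1/2$ is bounded while $w$ grows.
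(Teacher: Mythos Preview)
Your argument is correct and the bookkeeping is right: after the substitution $w=(1/2)\sec\phi$ and the uniform expansion of the denominator, the problem does reduce to the $\phi$--integral you wrote, and your split at $\pi/2-\delta$ with $\delta=L^{-1/2}$ together with the $t=-\log\cos\phi$ substitution handles the edge correctly. Two small points to tighten: (i) your denominator expansion is stated ``for $w\geq1$'' but you use it on all of $[1/2,A]$; it does hold there since $x\mapsto x\log x$ is bounded on $(0,1]$, so just say so; (ii) near $s=0$ the expression ``$s+O(1)$'' is not manifestly bounded away from zero---you should invoke $g\geq g(1)$ to get the uniform lower bound $4(\vF+g(1))$ on the normalized denominator and then split off $s\in[-\log 2, 2C]$ as an $O(e^{-L})$ contribution before applying the $e^M/M$ asymptotic of the exponential integral on the rest.

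The paper takes a different and somewhat cleaner route. Rather than a trigonometric substitution, it splits the $w$--integral at $w=\log^2(\Lambda/|k|)$. On $[\log^2(\Lambda/|k|),\Lambda/|k|]$ it simply uses $g\geq g(1)$ to bound by the no--exchange expression of Remark~\ref{rmk:B_no_exchange}, which they have already computed in closed form, giving $O(L^{-2})$. On $[1/2,\log^2(\Lambda/|k|)]$ the arguments of $g$ tend to infinity, so the pointwise limit of $L$ times the integrand is $\tfrac{4}{\pi}\sqrt{1/4-v^2}\,/\,(w\sqrt{w^2-1/4})$, and dominated convergence (with an $L$--independent majorant obtained from $g(r)\geq\tfrac14\log r-C$) gives the limit $\pi/4$. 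What the paper's approach buys is brevity: it recycles the explicit $B^0$ computation and avoids your delicate edge analysis entirely. What your approach buys is a more explicit quantitative error ($O(L^{-3/2})$ for the tail) and independence from the $B^0$ calculation; it is also closer in spirit to a direct Laplace--type evaluation.
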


\begin{proof}
In~\eqref{eq:last_formula_B}, we split the integral in $w$ as follows:
$$\int_{1/2}^{\Lambda/|k|}dw=\int_{1/2}^{\log^2(\Lambda/|k|)}dw+\int_{\log^2(\Lambda/|k|)}^{\Lambda/|k|}dw.$$
By doing so we can write $B(k)=B_1(k)+B_2(k)$ with an obvious definition.
For $w\in [\log^2(\Lambda/|k|),\Lambda/|k|]$ we just use that $g\geq g(1)$ and we obtain
\begin{equation*}
B_2(k)\leq \frac{1}{\pi(\vF+g(1))}
\int_{\log^2(\Lambda/|k|)}^{\Lambda/|k|}dw \int_{0}^{\min\{ 1/2, \Lambda/|k| - w\} }dv\, \frac{  \sqrt{1/4-v^2} }{w\sqrt{ w^2 - 1/4} }.
\end{equation*}
Using the computation of Remark~\ref{rmk:B_no_exchange} we see that the right hand side behaves like
$$\frac 1{8(\vF+g(1))} \left(\frac{\pi}{2}-\arccos\left(\frac{1}{\log^2(\Lambda/|k|)}\right)\right)\underset{|k|\to0}{\sim}\frac{1}{8(\vF+g(1))\log^2(\Lambda/|k|)}.$$
In particular,
$$\lim_{k\to0}\log(\Lambda/|k|) B_2(k)=0.$$
Now for $w\in [1/2,\log^2(\Lambda/|k|)]$, we can safely expand the terms involving $g$. Indeed, we have $w\pm v\leq \log^2(\Lambda/|k|)+1/2$, hence 
$$\frac{\Lambda}{|k|\,(w\pm v)}\geq \frac{\Lambda}{|k|\,(\log^2(\Lambda/|k|)+1/2)}\underset{k\to0}{\longrightarrow}\ii.$$
By the dominated convergence theorem we find that
$$\lim_{k\to0}\log(\Lambda/|k|)B_1(k)= 
\lim_{k\to0}\frac 4\pi\int_{1/2}^{\log^ 2(\Lambda/|k|)}dw \int_{0}^{\min\{ 1/2, \Lambda/|k| - w\} }dv\, \frac{  \sqrt{1/4-v^2} }{w\sqrt{ w^2 - 1/4}}.$$
The right hand side is again similar to $B^0(k)$ computed in Remark~\ref{rmk:B_no_exchange} and it is equal to $\pi/4$. 
\end{proof}

Note that, in contrast to the no-exchange function $B^0$ which has the $\vF$-dependent finite limit $\pi/(16\vF)$ at $k=0$, the true function $B(k)$ tends to zero and its behavior on first order is universal, it does \emph{not} depend on $\vF$. As we have explained, if we neglect the exchange density coming from $R_\lambda$, then we find that in first order in $\lambda$
$$\widehat{\rho_Q}(k)\underset{\lambda\ll1}\simeq \lambda\frac{B(k)}{1+B(k)}\hat\nu(k)\underset{\substack{\lambda\ll1\\ |k|\ll1}}\simeq \lambda\,\frac\pi{4\,\log\left(\frac{\Lambda}{|k|}\right)}\hat\nu(k).$$
That this density vanishes at $k=0$ means that the response of the graphene sheet in the presence of $\nu$ is essentially neutral. This is stark contrast with the massive 3D case, in which there is always (partial) screening, i.e., $B(0)>0$.
However the fact that here $B(k)\to0$ only logarithmically creates some long range oscillations in $x$ space which induce some weak screening effects. Indeed, if we compute the first-order polarization charge of graphene in a ball of radius $R$ (using a smooth localization function $\chi$), we get 
$$\int_{\R^2}\rho_Q(x)\chi(x/R)\,dx\simeq \lambda\int_{\R^2}\frac{\pi\widehat{\nu}(k)}{4\,\log\left(\frac{\Lambda}{|k|}\right)}\widehat\chi(Rk)\,R^2dk\underset{R\to\ii}\sim \lambda\frac{\pi\int_{\R^2}\nu}{4\log(R)}.$$
It goes to zero when $R\to\ii$, but very slowly. So in a finite ball of radius $R$ we might have the impression that $\int_{B_R}\rho_Q>0$ is not small, i.e., that the external density $\nu$ is partially screened by the spontaneaous polarization of graphene.

In this discussion we have neglected the first order density $\rho_{1,X}$ coming from the exchange term $R_\lambda$. The effect of this additional term is not clear to us.

%%%%%%%%%%%%%%%%%%%%%%%%%%%%%%%%%%%%%%%%%%%%
\section{Proof of Theorem~\ref{thm:existence_pol_vac}}\label{sec:proof}
%%%%%%%%%%%%%%%%%%%%%%%%%%%%%%%%%%%%%%%%%%%%

Let $(Q_n)$ be a minimizing sequence for~\eqref{eq:def_min_BDF}. Using the strict inequality~\eqref{eq:cond_v_F_bis_strict}, it is easy to see from our estimates that $(Q_n)$  is uniformly bounded in $\tilde\X$:
\begin{align}
\tr\big(|\cD^0|^{1/2}Q_n^2|\cD^0|^{1/2}\big)& \leq C, \label{unifX} \\
\tr\big(|\cD^0|^{1/2}(Q_n^{++} - Q_n^{--})|\cD^0|^{1/2}\big) & \leq C. \label{unifE}
\end{align}
Therefore, up to a subsequence, there exists an element $Q \in \tilde\X$, such that 
$$ Q_n \rightharpoonup Q \quad \text{weakly-$\ast$ in $\tilde\X$,}$$
similarly as in Remark~\ref{rmk:weak_CV}. This implies in particular that $\rho_{Q_n}\rightharpoonup\rho_Q$ weakly in $\cC$ and strongly in $L^p_{\rm loc}(\R^2)$, by Lemmas~\ref{lem:L1_loc} and~\ref{lem:Coulomb}. Since we also have 
$$\pscal{\phi,Q_n\phi'}\to\pscal{\phi,Q\phi'},\quad \forall \phi,\phi'\in\gH_\Lambda,$$
it is clear that the constraint $-P^0_-\leq Q\leq P^0_+$ is satisfied. Therefore $Q$ is an admissible state for the minimization problem~\eqref{eq:def_min_BDF}.

\bigskip

\noindent {\em Step 1}: Our goal is to show that the energy is lower semi-continuous, i.e. 
\begin{equation}\label{lowsemi}
\liminf_{n\to\ii} \cE^V(Q_n) \geq \cE^V(Q). 
\end{equation}
Observe that the function 
$$ \rho \to D(\rho - \nu,\rho - \nu)$$ is lower semi-continuous.
Therefore $$ \liminf_{n\to\ii} D(\rho_{Q_n} - \nu,\rho_{Q_n} - \nu) \geq D(\rho_Q - \nu, \rho_Q-\nu),$$
using the simple fact that $\rho_{Q_n} \rightharpoonup \rho_{Q}$ weakly in the Coulomb norm $\cC$. 
Since
\begin{multline*}
\tr |\cD^0|(Q^{++}_n - Q^{--}_n) \\ = \tr\big(|\D^0| - |p|(\vF+g(1))\big)(Q^{++}_n - Q^{--}_n) +  (\vF+g(1)) \tr |p|(Q^{++}_n - Q^{--}_n),
\end{multline*}
and $|\D^0| - |p|(\vF+g(1))\geq 0  $, we can use Fatou's Lemma for trace-class operators to obtain
$$ \tr\big(|\D^0| - |p|(\vF+g(1))\big)(Q^{++}_n - Q^{--}_n) \geq \tr\big[|\D^0| - |p|(\vF+g(1))\big](Q^{++} - Q^{--}),$$
and \eqref{lowsemi} will be achieved by showing 
\begin{multline}\label{semil}
\liminf_{n}\left( \tr |p|(Q^{++}_n - Q^{--}_n) - \frac {1}{2(\vF+g(1))} \iint_{\R^2\times \R^2} \frac {|Q_n(x,y)|^2}{|x-y|} dx\,dy\right) \\ \geq \tr |p|(Q^{++} - Q^{--}) - \frac {1}{2(\vF+g(1))}  \iint_{\R^2\times \R^2} \frac {|Q(x,y)|^2}{|x-y|} dx\,dy.
\end{multline}
This inequality will now be proved in {\em Step 2}.

\bigskip

\noindent{\em Step 2:} The proof of \eqref{semil} will heavily rely on \eqref{unifE}, i.e., the uniform boundedness of $Q_n$ in the energy norm $\cD^0$. 
The first ingredient of the proof is to split the space $\R^2$ into a region close to the defect $\nu$, and another one far away, and to show that the energy essentially localizes. A similar idea was used in~\cite{HaiLewSer-05b} but because of the absence of a mass, we have to argue differently.

Let us consider two real functions $\chi, \eta \in C^\infty([0,\infty);[0,1])$, such that $\chi = 1$ on $[0,1]$ and $\chi=0$ on $[2,\infty)$ and $\chi^2 + \eta^2 = 1$. 
We define $$ \chi_R(x) = \chi(|x|/R), \,\,\, \eta_R(x) = \eta(|x|/R)\,\,\, \forall x \in \R^2.$$ 
Now we apply the localization Lemma \ref{localization} given in the Appendix (and which is a consequence of the localization estimate obtained in \cite{LenLew-10}) in order to obtain
\begin{multline}
\tr |p| ( Q_n ^{++} - Q_n^{--}) \geq \tr |p| \chi_R (Q_n ^{++} - Q_n^{--}) \chi_R + \tr |p| \eta_R (Q_n ^{++} - Q_n^{--}) \eta_R \\
- c \left( \tr |p|  (Q_n ^{++} - Q_n^{--})   \right)^{1/2} \left( \|\nabla \chi_R\|_{L^2}^2 + \|\nabla \eta_R\|_{L^2}^2 \right)^{1/2}\left(\|\nabla \chi_R\|^2_{L^4} + \|\nabla \eta_R \|_{L^4}^2 \right)^{1/2}.
\end{multline}
Since
\begin{align*} 
\|\nabla \chi_R\|_{L^2}^2  = \|\nabla \chi\|^2_{L^2},\quad \| \nabla \chi_R\|^2_{L^4}  = \frac 1R \| \nabla \chi \|^2_{L^4},
\end{align*}
and analogously for $\eta_R$, the uniform boundedness of $\tr |p| ( Q_n ^{++} - Q_n^{--})$, implies
\[
\tr |p| ( Q_n ^{++} - Q_n^{--}) \geq \tr |p| \chi_R (Q_n ^{++} - Q_n^{--}) \chi_R + \tr |p| \eta_R (Q_n ^{++} - Q_n^{--}) \eta_R - \frac{C}{R},
\]
for an $n$-independent constant $C$.
On the other hand we can write 
\begin{multline*}
 \iint_{\R^2\times\R^2} \frac{|Q_n(x,y)|^2}{|x-y|} dx\,dy =  \iint_{\R^2\times\R^2} \frac{\chi_R^2(x) |Q_n(x,y)|^2}{|x-y|} dxdy\\ +   \iint_{\R^2\times\R^2} \frac{\eta_R^2(x) |Q_n(x,y)|^2}{|x-y|} dx\,dy. 
\end{multline*}
Using now $$ Q_n^{++} - Q_n^{--} \geq Q_n^2,$$
as well as Kato's inequality~\eqref{eq:Kato}, we obtain, similar to \cite{HaiLewSer-05b}, that under the assumptions on $\vF$
$$  \tr |p| \eta_R [Q_n ^{++} - Q_n^{--}] \eta_R \geq  \tr |p| \eta_R Q_n^2 \eta_R \geq \frac 1{ \vF+g(1)} \int\int \frac{\eta_R^2(x) |Q_n(x,y)|^2}{|x-y|} dxdy.$$
In order to conclude {\em Step 2}, it remains to show that 
\begin{multline}
\lim_{R\to \infty} \lim_{n\to \infty} \left(  \tr |p|\chi_R (Q^{++}_n - Q^{--}_n)\chi_R  - \frac {1}{2(\vF+g(1))} \int \frac {\chi_R^2 (x) |Q_n(x,y)|^2}{|x-y|} dxdy\right) \\ = \tr |p|(Q^{++} - Q^{--}) - \frac {1}{2(\vF+g(1))} \int \frac {|Q(x,y)|^2}{|x-y|} dxdy,
\label{eq:to_be_shown}
\end{multline}
where it is important to do the limit $n \to \infty$ first and then the limit $R \to \infty$. 
To this end we consider the two terms  
$$  \tr |p|\chi_R (Q^{++}_n - Q^{--}_n)\chi_R\quad \text{and}\quad \int \frac {\chi_R^2 (x) |Q_n(x,y)|^2}{|x-y|} dxdy,$$ separately. 
First, we observe that, thanks to the cut-off 
\begin{equation}
 \tr|p|\chi_R (Q^{++}_n - Q^{--}_n)\chi_R\\ = \tr|p| \chi_R \Pi_\Lambda (Q^{++}_n - Q^{--}_n)\Pi_\Lambda \chi_R. 
\end{equation}
As we have already seen, the operator $\Pi_\Lambda\chi_R$ is Hilbert-Schmidt. On the other hand
\begin{align*}
|p| \chi_R \Pi_\Lambda&=\frac{p}{|p|}\cdot\Big( [p,\chi_R]\Pi_\Lambda+ \chi_R p\Pi_\Lambda\Big)\\
&= \frac{p}{|p|}\cdot \Big(-i(\nabla\chi_R)\Pi_\Lambda+ \chi_R \,p\,\Pi_\Lambda\Big)
\end{align*}
is Hilbert-Schmidt since $p|p|^{-1}$ is bounded and $(\nabla\chi_R)\Pi_\Lambda$ and $\chi_R \,p\,\Pi_\Lambda$ are both Hilbert-Schmidt. Now we can use that $Q_n\rightharpoonup Q$ weakly-$\ast$ in $\cB(\gH_\Lambda)$ which implies in particular that $\tr(BQ_nB')\to\tr(BQB')$ for all $B,B'\in\gS^2$, and we obtain
\[
\lim_{n\to\ii} \tr|p|\chi_R (Q^{++}_n - Q^{--}_n)\chi_R = \tr|p|\chi_R (Q^{++} - Q^{--})\chi_R. 
\]
Passing then to the limit $R\to\ii$ gives
$$ \lim_{R\to \infty} \lim_{n\to \infty}   \tr |p|\chi_R (Q^{++}_n - Q^{--}_n)\chi_R =  \tr |p| (Q^{++} - Q^{--}). $$

It remains to prove the convergence of the exchange term in~\eqref{eq:to_be_shown}. Recall $$A_n := Q_n  - Q \rightharpoonup 0 \quad\text{in $\tilde\X$}$$ and that 
$ \tr|\cD^0| Q_n^2$ is uniformly bounded.
Notice first that, by the Cauchy-Schwarz inequality, it suffices to show 
\[
\lim_{R\to\ii}\lim_{n\to\ii} \iint_{\R^2\times\R^2} \frac {\chi_R^2 (x) |A_n(x,y)|^2}{|x-y|} dx\,dy = 0
\]
to conclude the second step of the proof. 
To this end we introduce an additional decomposition in the infrared regime in the following form:
\begin{align*}
A_n & \, = \big(\1(|p|\leq\eps)+\1(|p|\geq\eps)\big)A_n \big(\1(|p|\leq\eps)+\1(|p|\geq\eps)\big) \\
& \, = \1(|p|\geq\eps) A_n \1(|p|\geq\eps) + B_n,
\end{align*}
such that the first term on the right hand side
$$ \1(|p|\geq\eps) A_n \1(|p|\geq\eps) :=A_n^\eps$$
consists only of momenta larger than a given $\eps>0$. 
Since by assumption $A_n |p|^{1/2}\wto0$ weakly in $\gS^2$, we see that $A_n^\eps \wto0$ in $\gS^2$. Due to the cut-off in Fourier space, we deduce that the kernel $A_n^\epsilon(x,y)$ of $A_n^\eps$ converges to zero weakly in $H^{s}(\R^2\times\R^2)$ for all $s\geq0$, and hence strongly in $L^p_{\rm loc}(\R^2\times \R^2)$ for all $p$. Now we simply decompose
\begin{multline*}
 \iint_{\R^2\times\R^2} \frac {\chi_R^2 (x) |A_n^\eps(x,y)|^2}{|x-y|} dx\,dy= \iint_{\R^2\times\R^2}\frac {\chi_R^2 (x)\chi_{5R}^2(y) |A_n^\eps(x,y)|^2}{|x-y|} dx\,dy\\
+ \iint_{\R^2\times\R^2} \frac {\chi_R^2 (x)\eta_{5R}^2(y) |A_n^\eps(x,y)|^2}{|x-y|} dx\,dy. 
\end{multline*}
The first term on the right hand side converges to zero since  $A_n^\epsilon(x,y)\to0$ strongly in, say, $L^6(B_{2R}\times B_{10R})$ whereas $\chi_R^2 (x)\chi_{5R}^2(y)|x-y|^{-1}$ belongs to $L^{3/2}(\R^2\times\R^2)$. The second term on the right hand side is bounded by
$$ \iint_{\R^2\times\R^2}\frac {\chi_R^2 (x)\eta_{5R}^2(y) |A_n^\eps(x,y)|^2}{|x-y|} dx\,dy\leq \frac{C}{R}\tr(A_n^\eps)^2\leq \frac{C}{R\eps}\tr|p|(A_n)^2\leq \frac{C}{R\eps}.$$

It remains to consider the terms $$B_n = B_n^1 + B_n^2 + B_n^3$$
depending on the location of the cut-off function $\1(|p|\leq\eps)$ with the only important point being that either of the terms has one such IR-term $\1(|p|\leq\eps)$  on at least one side.
Again with Kato's inequality~\eqref{eq:Kato} we bound for $k=1,2,3$
\begin{align*}
 \iint_{\R^2\times\R^2} \frac {\chi_R^2 (x) |B^k_n(x,y)|^2}{|x-y|} dx\,dy &\leq  \iint_{\R^2\times\R^2} \frac { |B^k_n(x,y)|^2}{|x-y|} dx\,dy\\
 &\leq C\min\Big(\tr (|p| (B^k_n)^* B^k_n)\;,\;\tr (|p| B^k_n (B^k_n)^*)\Big). 
\end{align*}
Choosing on the right the term for which $\1(|p|\leq\eps)$ hits $|p|$, we see that 
$$ \iint_{\R^2\times\R^2} \frac {\chi_R^2 (x) |B^k_n(x,y)|^2}{|x-y|} dx\,dy\leq C\tr (|p|\1(|p|\leq\eps) (A_n)^2)$$
for $k=1,2,3$. At this point we use the infrared logarithmic divergence of the effective Fermi velocity to estimate
$$\tr \big(|p|\1(|p|\leq\eps) (A_n)^2\big)\leq \frac{C}{\log(\Lambda/\eps)}\tr|\cD^0|(A_n)^2\leq \frac{C}{\log(\Lambda/\eps)}.$$
If we take first $n\to\ii$, then $R\to\ii$ and finally $\eps\to0$, we have concluded {\em Step 2} of the proof. 

To summarize, we have shown that for a minimizing sequence $(Q_n)$,
$$E_{\rm BDF}^V=\liminf_{n\to\ii}\cE^V_{\rm BDF}(Q_n)\geq \cE^V_{\rm BDF}(Q)\geq E_{\rm BDF}^V$$
where we recall $E_{\rm BDF}^V$ is the infimum of the BDF energy. So we deduce that $\cE^V_{\rm BDF}(Q)=E_{\rm BDF}^V$ and that $Q$ is a minimizer.

\bigskip

\noindent{\em Step 3}: It remains to prove that it satisfies the self-consistent equation. This can be done 
as in \cite[Lemma 2]{HaiLewSer-05b} with the additional problem that there is no gap in the spectrum of the mean-field operator. For completeness let us indicate the idea of the proof. 
Since $Q$ is a minimizer, then for any other admissible state $Q'$, we know by convexity of the constraint that 
$$ \frac{d}{dt} \cE^V((1-t)Q + tQ')\Big |_{t=0} \geq 0,$$  which implies that
\begin{equation}
\tr \cD^0 (Q'-Q) +D(\rho_Q-\nu,\rho_{Q'-Q})+\Re\int_{\R^2}\int_{\R^2}\frac{\overline{Q(x,y)}(Q'-Q)(x,y)}{|x-y|}\,dx\,dy\geq0
\label{eq:positive}
\end{equation}
Let us remark that the operator 
$$\cD_Q:=\cD^0+(\rho_Q-\nu)\ast\frac{1}{|\cdot|}-\frac{Q(x,y)}{|x-y|}$$
is self-adjoint on the same domain as $\cD^0$. This follows from Rellich's theorem, since the two self-consistent terms are relatively bounded with respect to $\cD^0$, with relative bound as small as we want. For instance, we have for any $\phi\in\gH_\Lambda$
\begin{align*}
\norm{(\rho_Q-\nu)\ast\frac{1}{|\cdot|}\phi}_{L^2(\R^2)}&\leq \norm{(\rho_Q-\nu)\ast\frac{1}{|\cdot|}}_{L^4(\R^2)}\norm{\phi}_{L^4(\R^2)}\\
&\leq C\norm{\rho_Q-\nu}_{\cC}\norm{\phi}_{L^4(\R^2)}\\
&\leq C\norm{\rho_Q-\nu}_{\cC}\norm{|p|^{1/2}\phi}_{L^2}\\
&\leq C\norm{\rho_Q-\nu}_{\cC}\left(\eps\norm{\cD^0\phi}+\frac{1}{2\eps(\vF+g(1))}\norm{\phi}\right)
\end{align*}
where we have used the Sobolev inequality $\norm{f}_{L^4(\R^2)}\leq C\norm{|p|^{1/2}f}_{L^2(\R^2)}$ and
$$\norm{\rho\ast\frac{1}{|\cdot|}}_{L^4(\R^2)}\leq C\norm{|p|^{1/2}\rho\ast\frac{1}{|\cdot|}}_{L^2(\R^2)}=CD(\rho,\rho)^{1/2}.$$
The argument is similar for the exchange term: 
\begin{align*}
\left|\int_{\R^2}\frac{Q(x,y)}{|x-y|}\phi(y)\,dy\right|&\leq \left(\int_{\R^2}\frac{|Q(x,y)|^2}{|x-y|}\,dy\right)^{1/2}
\left(\int_{\R^2}\frac{|\phi(y)|^2}{|x-y|}\,dy\right)^{1/2}\\
&\leq C\left(\int_{\R^2}\frac{|Q(x,y)|^2}{|x-y|}\,dy\right)^{1/2}\pscal{\phi,|p|\phi}^{1/2}\\
&\leq C\left(\int_{\R^2}\frac{|Q(x,y)|^2}{|x-y|}\,dy\right)^{1/2}\left(\eps\norm{\cD^0\phi}+\frac{1}{2\eps(\vF+g(1))}\norm{\phi}\right).
\end{align*}
Taking the square and integrating with respect to $x$ gives the infinitesimal relative boundedness.

Now we choose 
\begin{equation}\label{Qprime}
Q':=\1_{(-\ii,0)}(\cD_Q)-P^0_-,
\end{equation}
and we claim that
\begin{align}
&\tr_0 \cD^0 (Q'-Q) +D(\rho_Q-\nu,\rho_{Q'-Q})+\Re \iint_{\R^2\times\R^2}\frac{\overline{Q(x,y)}(Q'-Q)(x,y)}{|x-y|}\,dx\,dy\nonumber
\\
&\qquad = \tr|\cD_Q|\big(P'(Q'-Q)P'-(P')^\perp(Q'-Q)(P')^\perp\big)\label{eq:first-line}\\
&\qquad \leq -\tr|\cD_Q|(Q'-Q)^2\label{eq:negative}
\end{align}
The first line~\eqref{eq:first-line} would just be the linearity of the trace if all the operators were trace-class. Because of our generalized definition of the trace, it is more complicated to verify~\eqref{eq:first-line}. With a gap this was done in~\cite{HaiLewSer-05a} and without a gap similar arguments have been used in~\cite{FraHaiSeiSol-12} and in~\cite{FraLewLieSei-11b}. We will not discuss this point further. Putting~\eqref{eq:positive} and~\eqref{eq:negative} together, we reach the conclusion that $Q=Q'$ except possibly on the kernel of $\cD_Q$. Therefore
$$Q+P^0_-=\1_{(-\ii,0)}(\cD_Q)+\delta$$
for some $0\leq\delta\leq \1_{\{0\}}(\cD_Q)$. If $\ker(\cD_Q)=\{0\}$ then of course $\delta=0$. 

If $\ker(\cD_Q)\neq\{0\}$ but has dimension $\geq2$, it is a well-known fact that $\delta=0$ or $\delta=\1_{\{0\}}(\cD_Q)$ that is, $\delta$ must fill the last shell completely, see~\cite{BacLieLosSol-94,BacLieSol-94} and~\cite[Prop. 3]{HaiLewSer-09}. If $\dim\ker(\cD_Q)=1$, and $Q+P_0$ is a projector, then, necessarily, $\delta = 0$ or $\delta =  \1_{\{0\}}(\cD_Q)$ and we are done. If, however, $Q + P_0$ is not a projector, then the argument does not work but, in this case, the energy does not change if we subtract $\delta$ from $Q$, because the corresponding particle does not interact with itself, i.e., $Q'=Q-\delta$ is a minimizer as well. Therefore we can redo the above argument with $Q'$ instead of $Q$. But now, thanks to our definition \eqref{Qprime}, we actually know that $Q' + P_0$ is a projector.  Hence, we deduce, as stated, that $Q'=\1_\mathcal{I}(\cD_{Q'})-P^0_-$ with $\mathcal{I}=(-\ii,0)$ or $\mathcal{I}=(-\ii,0]$. \qed

\begin{remark}
Since $Q$ must be a minimizer for~\eqref{eq:def_min_BDF}, we deduce that 
$$\lim_{n\to\ii}\cE^V_{\rm BDF}(Q_n)= \cE^V_{\rm BDF}(Q).$$ 
More precisely, we see that 
$$\lim_{n\to\ii}D(\rho_{Q_n},\rho_{Q_n})=D(\rho_{Q},\rho_{Q})$$
which implies that $\rho_{Q_n}\to\rho_Q$ strongly in $\cC$. Similarly we have 
$$\lim_{n\to\ii}\tr|\D^0|(Q^{++}_n - Q^{--}_n) = \tr|\D^0|(Q^{++} - Q^{--})$$
and
$$\lim_{n\to\ii} \iint_{\R^2\times\R^2} \frac{|Q_n(x,y)|^2}{|x-y|}dx\,dy=\lim_{n\to\ii} \iint_{\R^2\times\R^2} \frac{|Q(x,y)|^2}{|x-y|}dx\,dy.$$
By the reciprocal of Fatou's lemma for operators, this proves strong convergence of $|\cD^0|^{1/2}Q_n^{\pm\pm}|\cD^0|^{1/2}$ in $\gS^1$, and strong convergence of $Q_n(x,y)|x-y|^{-1/2}$ in $\gS^2$.
\end{remark}

%%%%%%%%%%%%%%%%%%%%%%%%%%%%%%%%%%%%%%%%%%%%
\appendix
\section{Localization of massless kinetic energy}

In the following lemma, we provide an IMS-type formula for the massless relativistic energy of a fermionic density matrix $0\leq\gamma\leq 1$, based on~\cite{LenLew-10}. Results of the same form already exist in the literature, see in particular~\cite[Thm. 9]{LieYau-88}. There, a simple explicit formula for the operator $|p| -\chi|p| \chi- \eta|p| \eta$ is used.

\begin{lemma}\label{localization}
Let $0\leq \gamma \leq 1$ such that $\tr|p|\gamma < \infty.$ Then for a partition of unity $\chi^2 + \eta^2 =1$, with smooth functions $\chi, \eta $ 
one obtains in dimension $d=2$
\begin{multline*}
\tr |p| \gamma \geq \tr |p| \chi \gamma \chi + \tr |p| \eta \gamma \eta  \\
- c \left( \tr |p|  \gamma  \right)^{1/2} \left( \|\nabla \chi\|_{L^2}^2 + \|\nabla \eta\|_{L^2}^2 \right)^{1/2}\left(\|\nabla \chi\|^2_{L^4} + \|\nabla \eta \|_{L^4}^2 \right)^{1/2},
\end{multline*}
and in dimension $d=3$ 
\begin{multline*}
\tr |p| \gamma \geq \tr |p| \chi \gamma \chi + \tr |p| \eta \gamma \eta  \\
- c \left( \tr |p|  \gamma  \right)^{2/3} \left( \|\nabla \chi \|_{L^2}^2 + \|\nabla \eta\|_{L^2}^2 \right)^{1/3}\left(\|\nabla \chi\|^2_{L^6} + \|\nabla \eta \|_{L^6}^2 \right)^{2/3},
\end{multline*}
where $c$ is a universal constant.
\end{lemma}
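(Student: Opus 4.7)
My plan follows the strategy of~\cite{LenLew-10}, adapted to the massless case. Starting from the nonlocal representation
\[
\langle \phi, |p|\phi\rangle = \frac{c_d}{2}\iint_{\R^d\times\R^d} \frac{|\phi(x) - \phi(y)|^2}{|x-y|^{d+1}}\,dx\,dy,
\]
and combining it with $\chi^2 + \eta^2 = 1$ together with the elementary expansion
\[
|\chi\phi(x)-\chi\phi(y)|^2 + |\eta\phi(x)-\eta\phi(y)|^2 - |\phi(x)-\phi(y)|^2 = \mathrm{Re}\bigl(\overline{\phi(x)}\phi(y)\bigr)\,\bigl[(\chi(x)-\chi(y))^2+(\eta(x)-\eta(y))^2\bigr],
\]
one obtains the quadratic form identity
\[
\langle\phi, (\chi|p|\chi + \eta|p|\eta - |p|)\phi\rangle = \frac{c_d}{2}\iint \mathrm{Re}\bigl(\overline{\phi(x)}\phi(y)\bigr)\,\frac{(\chi(x)-\chi(y))^2+(\eta(x)-\eta(y))^2}{|x-y|^{d+1}}\,dx\,dy.
\]
Diagonalising $\gamma$ and summing against its eigenbasis, the lemma reduces to
\[
|\tr(K\gamma)| \leq C\,(\tr|p|\gamma)^{1/2}\bigl(\|\nabla\chi\|_{L^2}^2 + \|\nabla\eta\|_{L^2}^2\bigr)^{1/2}\bigl(\|\nabla\chi\|_{L^4}^2 + \|\nabla\eta\|_{L^4}^2\bigr)^{1/2}
\]
in $d=2$, where $K := \chi|p|\chi + \eta|p|\eta - |p|$ is the self-adjoint operator with the above singular integral kernel.

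The factor $(\tr|p|\gamma)^{1/2}$ in $d=2$ is produced by a Hilbert--Schmidt Cauchy--Schwarz. Writing $\gamma = \gamma^{1/2}\gamma^{1/2}$, inserting $|p|^{1/2}|p|^{-1/2}$ between the two halves, and using $\gamma^2 \leq \gamma$ (a consequence of $0\leq\gamma\leq 1$), one gets
\[
|\tr(K\gamma)| = \bigl|\tr\bigl(\gamma^{1/2}|p|^{1/2}\cdot|p|^{-1/2}K\gamma^{1/2}\bigr)\bigr| \leq (\tr|p|\gamma)^{1/2}\,\||p|^{-1/2}K\|_{\gS^2},
\]
the second inequality using that $\gamma^{1/2}K|p|^{-1}K\gamma^{1/2}\leq K|p|^{-1}K$ in trace. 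In $d=3$ the corresponding $(\tr|p|\gamma)^{2/3}$, with H\"older weights $1/3$ and $2/3$ on the gradient norms, arises from a three-factor trace H\"older inequality with Schatten indices $(3,3,3)$ in place of the two-factor Cauchy--Schwarz with $(2,2)$.

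The main obstacle is then the purely kinematic Schatten bound
\[
\||p|^{-1/2}K\|_{\gS^2} \leq C\bigl(\|\nabla\chi\|_{L^2}^2+\|\nabla\eta\|_{L^2}^2\bigr)^{1/2}\bigl(\|\nabla\chi\|_{L^4}^2+\|\nabla\eta\|_{L^4}^2\bigr)^{1/2}
\]
in $d=2$ (and the analogous $\gS^3$--$L^6$ estimate in $d=3$). To prove it, I write $K = -\tfrac{1}{2}\bigl([\chi,[\chi,|p|]] + [\eta,[\eta,|p|]]\bigr)$ and use the algebraic identity
\[
[\chi,[\chi,|p|]] = \bigl\{|p|^{1/2},\,[\chi,[\chi,|p|^{1/2}]]\bigr\} + 2\,[\chi,|p|^{1/2}]^2,
\]
so that multiplying by $|p|^{-1/2}$ decomposes $|p|^{-1/2}K$ into a finite sum of products, each containing an elementary commutator $[\chi,|p|^{1/2}]$ whose Fourier kernel $(|p|^{1/2}-|q|^{1/2})\,\widehat{\chi}(p-q)$ satisfies the pointwise bound $(|p|^{1/2}-|q|^{1/2})^2 \leq |p-q|$. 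Each such product is then estimated via the Kato--Seiler--Simon inequalities $\|h(x)g(p)\|_{\gS^q}\leq (2\pi)^{-d/q}\|h\|_{L^q}\|g\|_{L^q}$ applied with $q=2$ (which contributes $\|\nabla\chi\|_{L^2}$) and $q=d+2$ (which contributes $\|\nabla\chi\|_{L^{d+2}}$), combined with H\"older in the Schatten spaces. The genuinely new difficulty, absent in the massive setting of~\cite{LenLew-10}, is that the infrared weight $|p|^{-1/2}$ must be fully absorbed by one half-power of momentum contained in the commutator $[\chi,|p|^{1/2}]$; controlling this absorption so that no logarithmic divergence at $p=0$ appears is the technical heart of the argument.
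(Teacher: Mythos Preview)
Your approach is genuinely different from the paper's and the first half is sound: the Lieb--Yau kernel identity for $K=\chi|p|\chi+\eta|p|\eta-|p|$ is correct, and the Cauchy--Schwarz step
\[
|\tr(K\gamma)|\leq (\tr|p|\gamma)^{1/2}\,\||p|^{-1/2}K\|_{\gS^2}
\]
is valid (and does use $\gamma\leq 1$). The difficulty is entirely in the Schatten estimate you announce, and here the argument is only a sketch with real gaps. First, the Kato--Seiler--Simon inequality applies to operators of the form $h(x)g(p)$, whereas after your commutator expansion the building blocks are $[\chi,|p|^{1/2}]$ and $[\chi,[\chi,|p|^{1/2}]]$, which are \emph{not} of that form; you do not explain how to reduce to it. Second, the single commutator $[\chi,|p|^{1/2}]$ is not Hilbert--Schmidt in $d=2$ (its kernel behaves like $|\nabla\chi(x)|\,|x-y|^{-3/2}$ near the diagonal), so a naive Schatten--H\"older on the decomposition fails; one must exploit cancellations in the products, and you do not say how. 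Third, in your $d=3$ discussion you assert that KSS with $q=d+2$ produces the $L^{d+2}$ gradient norm, but in $d=3$ this gives $L^5$, whereas the lemma requires $L^6=L^{2d}$; the ``three-factor H\"older with indices $(3,3,3)$'' is also left unexplained and it is not clear how it would yield the power $(\tr|p|\gamma)^{2/3}$, since $\|\gamma^{1/3}|p|^{1/3}\|_{\gS^3}^3\neq\tr|p|\gamma$ when $\gamma$ and $|p|$ do not commute.

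For comparison, the paper avoids all of this. It uses directly the operator inequality from~\cite{LenLew-10},
\[
K\leq \frac{1}{\pi}\int_0^\infty\sqrt{t}\,\frac{1}{t+p^2}\,f^2\,\frac{1}{t+p^2}\,dt,\qquad f^2=|\nabla\chi|^2+|\nabla\eta|^2,
\]
and then estimates $\tr\big(\tfrac{1}{t+p^2}f^2\tfrac{1}{t+p^2}\gamma\big)$ in two ways: by $\gamma\leq 1$ one gets $Ct^{d/2-2}\|f\|_{L^2}^2$, and by inserting $|p|^{-1/2}|p|^{1/2}$ and using the \emph{operator norm} bound $\||p|^{-1/2}f\|\leq C\|f\|_{L^{2d}}$ (Hardy--Littlewood--Sobolev) one gets $Ct^{-2}\|f\|_{L^{2d}}^2\tr|p|\gamma$. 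Splitting $\int_0^\eps+\int_\eps^\infty$ and optimizing in $\eps$ produces exactly the stated exponents, in both $d=2$ and $d=3$, with no commutator calculus needed. The two inputs $\gamma\leq 1$ and $\tr|p|\gamma<\infty$ enter in separate estimates and are interpolated, which is why the mixed product of norms $\|f\|_{L^2}$ and $\|f\|_{L^{2d}}$ appears naturally; in your route all the work is pushed into a single Schatten norm whose precise dependence on $f$ you have not established.
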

\begin{proof}
We use \cite[Lemma A.1]{LenLew-10} which states that
$$
|p|  - \chi |p| \chi - \eta |p| \eta  \geq 
- \frac 1{\pi} \int_0^\infty \sqrt{t}\, dt\, \frac 1{t+ p^2} \left( |\nabla \chi |^2 + |\nabla \eta|^2 \right)\frac 1{t+ p^2} .
$$
So, it remains to estimate the term 
$$ \tr \left(  \int_0^\infty \sqrt{t}\, dt\, \frac 1{t+ p^2} \left( |\nabla \chi |^2 + |\nabla \eta|^2 \right)\frac 1{t+ p^2}  \gamma
 \right).  $$
 We are going to decompose the integral into a small $t$ and large $t$-part. 
For that reason we will use two different estimates for the integrand. 
Define 
$$ f^2(x) =  |\nabla \chi (x)|^2 + |\nabla \eta(x)|^2,$$ 
Since $0\leq \gamma \leq 1$ we obtain 
\begin{multline}\label{est1}
\tr \frac 1{t+ p^2} f^2\frac 1{t+ p^2}  \gamma \leq \tr \frac 1{t+ p^2} f^2 \frac 1{t+ p^2}  = \int f^2(x) dx \frac 1{(2\pi)^{d/2}} \int \frac {dp}{(t+p^2)^2}  \\ = t^{d/2 - 2} \int f^2(x) dx \int \frac {dp}{(1+p^2)^2} ,
\end{multline}
on the other hand 
\begin{multline}\label{est2}
\tr \frac 1{t+ p^2} f^2\frac 1{t+ p^2}  \gamma = \tr |p|^{-1/2}  \frac 1{t+ p^2} f^2 \frac 1{t+ p^2} |p|^{-1/2} |p| \gamma \\ \leq \left\| \frac 1{(t + p^2) |p|^{1/2}} f^2 \frac 1{(t + p^2) |p|^{1/2}}  \right\|\tr |p| \gamma 
\leq \frac 1{t^2} \left \| \frac 1{|p|^{1/2}} f \right \|^2 \tr |p| \gamma.
\end{multline}
Recall that here $f$ is seen as a multiplication operator on $L^2(\R^2)$. In dimension $d=2$ the Hardy-Littlewood-Sobolev inequality states 
\begin{align*}
\left \| \frac 1{|p|^{1/2}} f\right\|_{L^2(\R^2)\to L^2(\R^2)}^2&= \sup_{\norm{\phi}_{L^2(\R^2)}=1}\left \| (-\Delta)^{-1/4} (f\phi)\right\|_{L^2(\R^2)}^2\\
&=\sup_{\norm{\phi}_{L^2(\R^2)}=1}\int_{\R^2}\int_{\R^2}  \frac{\overline{f(x)\phi(x)}f(y)\phi(y)}{|x-y|}dx\,dy\\
&  \leq C\sup_{\norm{\phi}_{L^2(\R^2)}=1}\| f\phi\|^2_{L^{4/3}(\R^2)}=C\norm{f}_{L^4(\R^2)}^2.
\end{align*}
Combining these results we get in $d=2$ 
\begin{align*} 
\tr \left(  \int_0^\infty \sqrt{t}\, dt\, \frac 1{t+ p^2} f^2 \frac 1{t+ p^2}  \gamma \right) 
&\leq  C\left(\int_0^\eps \frac{dt}{\sqrt{t}} \|f\|_{L^2(\R^2)}^2 + \int_{\eps}^{\infty} \frac{dt}{t^{3/2}}  \| f\|^2_{L^4(\R^2)} \tr |p| \gamma\right)\\
& \leq C\left(\sqrt{\eps}  \|f\|_{L^2(\R^2)}^2  + C\frac 1{\sqrt {\eps}}  \| f\|^2_{L^4(\R^2)} \tr |p| \gamma \right)\\ 
 &\leq C\|f\|_{L^2(\R^2)} \|f\|_{L^4(\R^2)} (\tr |p| \gamma)^{1/2},
\end{align*}
after optimizing over $\eps$, which proves the Lemma in 2D. For $d=3$ one proceeds similar with the corresponding Hardy-Littlewood-Sobolev inequality in 3D, leading to
$$\left \| \frac 1{|p|^{1/2}} f\right\|_{L^2(\R^3)\to L^2(\R^3)}^2\leq C\norm{f}_{L^6(\R^3)}^2.$$
This concludes the proof of Lemma~\ref{localization}.
\end{proof}

%%%%%%%%%%%%%%%%%%%%%%%%%%%%%%%%%%%%%%%%%%%%%%%%%%%%%%%%%%%%%%%%%%%%%%%%%%%%%%%%%%%%%%%%%%%%%%%%%%%%%%%%%%
% \bibliographystyle{siam}
% \bibliography{biblio}

\end{document}